\newcommand{\EE}{\mathbb{E}} 
\newcommand{\RR}{\mathbb{R}}
\renewcommand{\SS}{\mathbb{S}}
\def\cqedsymbol{\ifmmode$\lrcorner$\else{\unskip\nobreak\hfil
\penalty50\hskip1em\null\nobreak\hfil$\lrcorner$mail
\parfillskip=0pt\finalhyphendemerits=0\endgraf}\fi}
\DeclareMathOperator{\cro}{cr}
\newcommand{\InterSec}[3]{%
    \path[name intersections={of=#1 and #2, by=#3, sort by=#1,total=\t}]
        \pgfextra{\xdef\InterNb{\t}}; }
\begin{document}


\title{Limiting crossing numbers for geodesic drawings on the sphere}

\author{Marthe Bonamy\thanks{Supported in part by the ANR Project DISTANCIA (ANR-17-CE40-0015) operated by the French National Research Agency (ANR).}
\inst{1}
\and
  Bojan Mohar\thanks{Supported in part by the NSERC Discovery Grant R611450 (Canada), by the Canada Research Chairs program, and by the Research Project J1-8130 of ARRS (Slovenia).}
\inst{2}
\and
  Alexandra Wesolek\thanks{Supported by the Vanier Canada Graduate Scholarships program.}
\inst{2}
}

\institute{CNRS, LaBRI, Universit\'e de Bordeaux, France \\
  \email{marthe.bonamy@u-bordeaux.fr}
  \and
  Department of Mathematics, Simon Fraser University, Burnaby, BC, Canada\\
  \email{mohar@sfu.ca}\\
  \email{agwesole@sfu.ca}
  }
\maketitle

\begin{abstract}
We introduce a model for random geodesic drawings of the complete bipartite graph $K_{n,n}$ on the unit sphere $\SS^2$ in $\RR^3$, where we select the vertices in each bipartite class of $K_{n,n}$ with respect to two non-degenerate probability measures on $\SS^2$. It has been proved recently that many such measures give drawings whose crossing number approximates the Zarankiewicz number (the conjectured crossing number of $K_{n,n}$). In this paper we consider the intersection graphs associated with such random drawings. We prove that for any probability measures, the resulting random intersection graphs form a convergent graph sequence in the sense of graph limits. The edge density of the limiting graphon turns out to be independent of the two measures as long as they are antipodally symmetric. 
However, it is shown that the triangle densities behave differently. We examine a specific random model, blow-ups of antipodal drawings $D$ of $K_{4,4}$, and show that the triangle density in the corresponding crossing graphon depends on the angles between the great circles containing the edges in $D$ and can attain any value in the interval $\bigl(\frac{83}{12288}, \frac{128}{12288}\bigr)$.
\keywords{Crossing Number \and Graph Limits \and Geodesic Drawing \and Random Drawing \and Triangle Density.}
\end{abstract}

\section{Introduction}

The crossing number $cr(G)$ of a graph $G$ is the minimum number of crossings obtained by drawing $G$ in the plane (or the sphere). In this paper we consider the \emph{(spherical) geodesic crossing number} $cr_0(G)$, for which we minimize the number of crossings taken over all drawings of $G$ in the unit sphere $\SS^2$ in $\RR^3$ such that each edge $uv$ is a geodesic segment joining points $u$ and $v$ in $\SS^2$. Recall that \emph{geodesic segments} (or \emph{geodesic arcs}) in $\SS^2$ are arcs of great circles whose length is at most $\pi$. 
Also note that $cr(G) \le cr_0(G)$ for every graph $G$.

Crossing number minimization has a long history and is used both in applications and as a theoretical tool in mathematics. We refer to \cite{SchBook18} for an overview about the history and the use of crossing numbers. Despite various breakthrough results about crossing numbers, some of the very basic questions remain open as of today, two of the most intriguing being what are the crossing numbers of the complete graphs $K_n$ and what are the crossing numbers of the complete bipartite graphs $K_{n,n}$ (the Tur\'an Brickyard Problem). The asymptotic versions of both problems are strongly related \cite{richter1997relations} and a lower bound for the limiting crossing number of $K_{n,n}$ gives a related lower bound for $K_n$. The asymptotic version of the rectilinear crossing number of $K_n$ is related to Sylvester's Four point problem in the plane \cite{Sylvester65,SchWi94}, see also \cite{SchBook18} for recent results. The geodesic version on the sphere, which we discuss in this paper, is a spherical version of Sylvester's problem.

\subsection{Outline}

In this paper we initiate the study of limiting properties of intersection graphs associated with drawings of complete and complete bipartite graphs. We limit ourselves to geodesic drawings on the unit sphere in $\RR^3$ in which case the drawings are determined by the choice of the placements of the vertices on the sphere. The first main result of this work shows that whenever the vertices in each bipartite class of $K_{n,n}$ are selected according to some (non-degenerate) probability measure on $\SS^2$ (where the two measures used for each class can be different), then, with probability 1, the intersection graphs form a convergent sequence of graphs in the sense of graph limits \cite{LovaszBook}. See Theorem \ref{thm:convergence}.

The basic combinatorial property of convergent graph sequences is that of subgraph densities. The density of edges in the crossing graphs corresponds to the asymptotic crossing number. In addition to this, we examine one particular related basic question: what is the density of triangles. 
We show that their density can be substantially different among different randomized models. 
Although this result may be seen as ``expected'', it is still somewhat surprising. Indeed, it shows that there is a large variety of drawings of $K_{n,n}$, all attaining the Zarankiewicz bound, 
in which the number of triples of mutually crossing edges varies significantly, and can attain any value in the interval $\bigl(\frac{83}{12288}, \frac{128}{12288}\bigr)$.
See Theorems \ref{thm:K3 blowup} and \ref{thm:trianglebounds}. 
We believe that further exploring of subgraph densities in crossing graphons may give a deeper insight into the basic Tur\'an's Brickyard Problem for geodesic drawings on the sphere.

\subsection{Asymptotic Zarankiewicz Conjecture}

During World War \MakeUppercase{\romannumeral 2}, Hungarian mathematician P\'{a}l Tur\'{a}n worked in a brick factory near Budapest. There the bricks were transported on wheeled trucks from kilns to storage yards. It was difficult to push the trucks past the rail crossings and it would result in extra work if bricks fell of the trucks. Therefore Tur\'{a}n wondered if there was a way of arranging the rails such that there would be less crossings between them. Seeing the kilns and storage yards as parts of a bipartite graph, this led to the more general question of the minimum number of crossings in drawings of complete bipartite graphs $K_{n,n}$. Zarankiewicz \cite{zarankiewicz1955problem} and Urbanik \cite{urbanik1955solution} suggested drawings that involved
\begin{equation}\label{eq:Hill}
   Z(m,n) = \lfloor \tfrac{n}{2}\rfloor \, \lfloor \tfrac{n-1}{2}\rfloor \, \lfloor \tfrac{m}{2}\rfloor \,\lfloor \tfrac{m-1}{2}\rfloor =
   \left\{
       \begin{array}{ll}
         \tfrac{1}{16} n(n-2)m(m-2), & \hbox{$n,m$ are even;} \\[1mm]
         \tfrac{1}{16} n(n-2)(m-1)^2, & \hbox{$n$ is even, $m$ is odd;}\\[1mm]
         \tfrac{1}{16} (n-1)^2(m-1)^2, & \hbox{$n,m$ are odd}
       \end{array}
   \right.
\end{equation}
crossings. Whether this value is the best possible remains unanswered to this day despite numerous attacks using powerful machinery in trying to resolve this conjecture. 


A general construction of drawings of complete bipartite graphs attaining the Zarankiewicz bound was recently exhibited~\cite{Mo19}. All of them are geodesic drawings in $\SS^2$ and they show that
\begin{equation}
   cr(K_{n,n})\le cr_0(K_{n,n}) \le Z(n,n) \quad \hbox{for every } n\ge 1.
   \label{eq:crbounds}
\end{equation}

It is not hard to see that the following limits exist: 
$$
   \lambda := \lim_{n\to\infty}  n^{-4}\, cr(K_{n,n}) \quad \hbox{and} \quad \lambda_0 := \lim_{n\to\infty} n^{-4}\, cr_0(K_{n,n}).
$$
Clearly, (\ref{eq:crbounds}) implies that $\lambda \le \lambda_0 \le \tfrac{1}{16}$. The \emph{asymptotic Zarankiewicz conjecture} for the usual and the geodesic crossing number is also open.

\begin{conjecture}
\label{conj:asymptotic Zarankiewicz conjecture}
$\lambda = \lambda_0 = \tfrac{1}{16}$.
\end{conjecture}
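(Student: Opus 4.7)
The upper bound $\lambda_0 \le \tfrac{1}{16}$ is immediate from \eqref{eq:crbounds}, witnessed by the explicit spherical geodesic drawings constructed in \cite{Mo19}, and $\lambda\le\lambda_0$ is tautological. Thus the real content of the conjecture is the matching lower bound $\lambda \ge \tfrac{1}{16}$, and I would attack the geodesic version $\lambda_0 \ge \tfrac{1}{16}$ first, using the graph-limits machinery that the paper sets up.

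The first step is a compactness reduction. Fix any sequence of spherical geodesic drawings $D_n$ of $K_{n,n}$ that achieves $cr_0(K_{n,n})$. The vertex positions determine empirical probability measures $\mu_n^{(1)},\mu_n^{(2)}$ on $\SS^2$; by weak compactness of Borel probability measures on the compact space $\SS^2$, pass to a convergent subsequence with limits $\mu^{(1)},\mu^{(2)}$. Combining this with Theorem~\ref{thm:convergence}, the expected crossing count in the random drawing obtained by i.i.d.\ sampling from $(\mu^{(1)},\mu^{(2)})$ agrees with $\lambda_0\cdot n^4(1+o(1))$, so the problem reduces to the purely analytic question of minimizing the functional
\[
   I(\mu^{(1)},\mu^{(2)}) \;=\; \int_{(\SS^2)^4} \mathbf{1}[\,\overline{ab}\cap\overline{cd}\neq\emptyset\,] \, d\mu^{(1)}(a)\,d\mu^{(1)}(c)\,d\mu^{(2)}(b)\,d\mu^{(2)}(d)
\]
over all non-degenerate probability measures on $\SS^2$, where $\overline{xy}$ denotes the geodesic segment from $x$ to $y$.

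The second, harder step is to show that $I(\mu^{(1)},\mu^{(2)}) \ge \tfrac{1}{16}$. Here I would try a two-stage symmetrization: first, prove that replacing each $\mu^{(i)}$ by its antipodal average $\tfrac12(\mu^{(i)}+\iota_*\mu^{(i)})$ can only decrease $I$ (a reflection-positivity style inequality for pairs of great circles), which by the edge-density result cited in the introduction pins $I$ at the known value for antipodally symmetric measures; second, prove that this value equals $\tfrac{1}{16}$ by evaluating it on the uniform measure and invoking the rigidity of the symmetrized functional. To lift $\lambda_0\ge\tfrac{1}{16}$ to $\lambda\ge\tfrac{1}{16}$ one would then invoke a Richter--Thomassen-type reduction as in \cite{richter1997relations}, arguing that an asymptotically optimal topological drawing can be locally perturbed into a geodesic one without changing the leading term.

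The main obstacle, and the reason the conjecture has resisted attack for decades, is exactly the symmetrization inequality in the second step: the best known counting arguments via $cr(K_{k,k})$ for small $k$ and the SDP bounds of de Klerk et al.\ fall strictly short of $\tfrac{1}{16}$, so something genuinely new is required to control $I$ uniformly over all measures. I would expect that the required extra ingredient is a spherical rearrangement inequality for intersecting great-circle arcs analogous to the Riesz--Sobolev inequality, and its absence is what currently blocks completing this program.
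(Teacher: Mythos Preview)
The statement is labeled a \emph{Conjecture} in the paper and is presented as open; the surrounding text says explicitly that the question ``remains unanswered to this day despite numerous attacks.'' There is therefore no proof in the paper to compare against, and your proposal is, as you yourself concede in the final paragraph, a program rather than a proof, with the decisive symmetrization inequality missing.

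Beyond that acknowledged gap, the compactness reduction in your first step is not sound as written. Weak subsequential limits $\mu^{(1)},\mu^{(2)}$ of the empirical measures of optimal drawings need not be nondegenerate: the classical Zarankiewicz drawings already concentrate all vertices near a bounded number of locations, so the limiting measures can be purely atomic or supported on a single great circle---exactly the situations excluded by the hypotheses of Theorem~\ref{thm:convergence}. The functional $I$ is not weakly continuous at such degenerate measures, so one cannot conclude $I(\mu_n^{(1)},\mu_n^{(2)})\to I(\mu^{(1)},\mu^{(2)})$, which severs the link you need between $\lambda_0$ and $\inf I$. Your appeal to Theorem~\ref{thm:convergence} is in any case misplaced: that theorem concerns almost-sure convergence of the crossing graphs of i.i.d.\ samples from a \emph{fixed} nondegenerate pair of measures, and says nothing about a sequence of deterministic optimal drawings or about varying empirical measures. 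Finally, the proposed lift from $\lambda_0\ge\tfrac{1}{16}$ to $\lambda\ge\tfrac{1}{16}$ via \cite{richter1997relations} is not available: that reference relates the asymptotic crossing numbers of $K_n$ and $K_{n,n}$, not topological drawings to geodesic ones, and there is no known procedure that perturbs an arbitrary optimal topological drawing of $K_{n,n}$ into a spherical geodesic one while preserving the leading term of the crossing count.
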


\subsection{Random drawings of complete bipartite graphs}

In 1965, Moon \cite{Moon65} proved that a random set of $n$ points on the unit sphere $\SS^2$ in $\RR^3$ joined by geodesics gives rise to a drawing of $K_n$ whose number of crossings asymptotically approaches the conjectured value. It was proved recently \cite{Mo20} that the same phenomenon appears in a much more general random setting. These results can also be extended to random drawings of the complete bipartite graphs $K_{n,n}$ where it was shown that under a symmetry condition on the probability measures the crossings in such drawings converge to the Zarankiewicz value.

A probability distribution $\mu$ on $\SS^2$ is \emph{nondegenerate} if for every great circle $Q\subset \SS^2$, $\mu(Q)=0$. It is \emph{antipodally-symmetric} if for every measurable set $A\subseteq \SS^2$ the measure of its antipodal set $\overline{A}$ is the same, $\mu(A)=\mu(\overline{A})$. 

\begin{theorem}[\cite{Mo20}]
\label{thm:Mo}
Let $\mu_1,\mu_2$ be nondegenerate antipodally-symmetric probability distributions on the unit sphere $\SS^2$. Then a $\mu_1$-random set of $n$ points on $\SS^2$ joined by geodesics (segments of great circles) to a $\mu_2$-random set of $n$ points gives rise to a drawing $D_n$ of the complete bipartite graph $K_{n,n}$ such that $cr(D_n)/Z(n,n)=1 + o(1)$ a.a.s.
\end{theorem}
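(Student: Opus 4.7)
The plan is a first-moment plus variance-concentration argument. Writing $cr(D_n) = \sum_{\{e,e'\}} X_{e,e'}$, where the sum is over unordered pairs of vertex-disjoint edges of $K_{n,n}$ and $X_{e,e'}$ is the indicator that the geodesic arcs of $e$ and $e'$ cross, linearity of expectation gives $\mathbb{E}[cr(D_n)] = 2\binom{n}{2}^2\, p$, with
\[
   p = \Pr\bigl[\,\text{arc } a_1b_1 \text{ crosses arc } a_2b_2\,\bigr]
\]
for independent $a_1, a_2 \sim \mu_1$ and $b_1, b_2 \sim \mu_2$. The heart of the matter is to establish $p = \tfrac18$, which yields $\mathbb{E}[cr(D_n)] = \tfrac{1}{16} n^2(n-1)^2 = Z(n,n)(1 + o(1))$ after a brief comparison with \eqref{eq:Hill} for each parity of $n$.

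The computation of $p$ is where antipodal symmetry enters. Almost surely the great circles $C_1$ through $(a_1, b_1)$ and $C_2$ through $(a_2, b_2)$ are distinct and meet in two antipodal points $\pm q$. The key geometric fact is that the four arcs joining $\varepsilon_1 a_1$ to $\delta_1 b_1$, as $(\varepsilon_1, \delta_1)$ ranges over $\{\pm 1\}^2$, partition $C_1$ up to a finite set (a quick check in arc-length coordinates: if $a_1$ is at angle $0$ and $b_1$ at angle $\beta \in (0,\pi)$, the four arcs are $[0,\beta]$, $[\beta,\pi]$, $[\pi,\pi+\beta]$, $[\pi+\beta,2\pi]$). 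Hence exactly one of the four arcs has $q$ in its interior, and exactly one has $-q$; the same holds on $C_2$. The arcs $(\varepsilon_1 a_1)(\delta_1 b_1)$ and $(\varepsilon_2 a_2)(\delta_2 b_2)$ cross iff both contain the same point of $\{q, -q\}$, which happens for exactly $2$ of the $16$ sign patterns in $\{\pm 1\}^4$. Antipodal symmetry of $\mu_1, \mu_2$ makes the joint law of $(\varepsilon_1 a_1, \delta_1 b_1, \varepsilon_2 a_2, \delta_2 b_2)$ equal to that of $(a_1, b_1, a_2, b_2)$ for every sign pattern, so averaging the crossing indicator over the $16$ patterns and swapping average with expectation gives $p = 2/16 = 1/8$.

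For concentration, bound $\mathrm{Var}(cr(D_n)) = \sum \mathrm{Cov}(X_{e,e'}, X_{f,f'})$. When the eight endpoints of $\{e, e', f, f'\}$ are all distinct, the indicators are independent and the covariance vanishes, so the nonzero contributions come from pairs sharing at least one vertex, of which there are only $O(n^7)$; each covariance is bounded trivially by $1$, so $\mathrm{Var}(cr(D_n)) = O(n^7) = o(\mathbb{E}[cr(D_n)]^2)$. Chebyshev's inequality then gives $cr(D_n)/Z(n,n) \to 1$ in probability, which is the a.a.s.\ statement of the theorem.

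The main obstacle is the geometric lemma underlying $p = \tfrac18$: verifying that the four sign-flipped arcs genuinely tile $C_1$ (rather than overlap or leave a gap), and that nondegeneracy of $\mu_1, \mu_2$ rules out the exceptional events (equal great circles $C_1 = C_2$, the intersection point $q$ coinciding with one of $\pm a_i, \pm b_j$, antipodal pairs $a_i = -b_j$, etc.). Once this geometric fact is in place, the remaining moment bookkeeping is routine.
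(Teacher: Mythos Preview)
Your argument is correct. The expectation step is sound: the four sign-flipped arcs $(\varepsilon_1 a_1)(\delta_1 b_1)$ do tile $C_1$ (each has length $<\pi$, so $q$ and $-q$ lie in distinct arcs), the count of exactly $2$ crossing patterns out of $16$ is right, and antipodal symmetry lets you average to get $p=\tfrac18$. Your variance bound $O(n^7)$ and the Chebyshev step are also fine.

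The paper takes a different route. Rather than computing $p$ directly, it invokes the deterministic result of \cite{Mo19} (Theorem~\ref{thm:Z} here): any antipodal geodesic drawing of $K_{2n,2n}$ on parts $P\cup\overline{P}$ and $Q\cup\overline{Q}$ has exactly $Z(2n,2n)$ crossings. It then couples the $(\mu_1,\mu_2)$-random drawing $D_n$ with the full antipodal drawing $D$ of $K_{2n,2n}$ obtained by adding all antipodes; by antipodal symmetry, $D_n$ has the same law as the drawing obtained from $D$ by deleting one vertex from each antipodal pair uniformly at random. Each crossing of $D$ survives this deletion with probability $\tfrac{1}{16}$, so linearity gives $\EE[cr(D_n)]=\tfrac{1}{16}Z(2n,2n)=\tfrac{1}{16}n^2(n-1)^2$. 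The paper's sketch stops at the expectation and treats concentration as routine.

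Your approach is more self-contained: it does not need the cited structural theorem, and the ``exactly $2$ of $16$'' count is in effect a four-point instance of that theorem proved from scratch. You also make the concentration step explicit. The paper's approach, on the other hand, packages the geometry into a single citation and makes transparent why the expectation lands exactly on $\tfrac{1}{16}n^2(n-1)^2$ rather than merely $\sim\tfrac{1}{16}n^4$.
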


The random drawing model in the theorem will be referred to as \emph{$(\mu_1,\mu_2)$-random drawing} of the complete bipartite graph $K_{n,n}$.

\subsection{Crossing graphon}

Let $N=\{n_1,n_2,n_3,\dots\}$ be an infinite set of positive integers, where $n_1 < n_2 < n_3 < \cdots$. Suppose that for each $n\in N$, we have a drawing $D_n$ of $K_{n,n}$. To each such drawing we associate the \emph{crossing graph} $X_n=X_n(D_n)$, whose vertices are all $n^2$ edges in $D_n$, and two of them are adjacent in $X_n$ if they cross in $D_n$. Then we can consider what may be the \emph{limit} of the sequence $(X_n)_{n\in N}$. The notion of \emph{graph limits} has been introduced by Lov\'asz et al. \cite{BCLSV06,BCLSV08,LSz06}, see \cite{LovaszBook}. The basic setup is described below.

Let $(X_n)_{n\in N}$ be a sequence of graphs. For any fixed graph $H$, let $k=|H|$ be its order, and let $hom(H,X_n)$ denote the number of graph homomorphisms $H\to X_n$, i.e. the number of maps $\phi: V(H)\to V(X_n)$ such that for each edge $uv\in E(H)$, $\phi(u)\phi(v) \in E(X_n)$. Then we define the \emph{homomorphism density} for $H$ as
$$
    t(H,X_n) = \frac{hom(H,X_n)}{|X_n|^k}.
$$
Note that this is the probability that a random mapping $V(H)\to V(X_n)$ is a homomorphism. If the sequence $t(H,X_n)$ converges, we denote its limit by $t(H)$. If $t(H)$ exists for every $H$, then we say that $(X_n)$ is a \emph{convergent sequence of graphs}. In that case there is a well-defined object $W$, called a \emph{graphon}, and the graphon $W$ is called the \emph{limit} of this convergent sequence \cite{BCLSV06,BCLSV08}. We define the \emph{homomorphism densities} of $W$ by setting $t(H,W) = \lim_{n\to \infty} t(H,X_n) = t(H)$.

The space of all graphons is a compact metric space \cite{LovaszBook,LSz06}.
Given any graphon $W$, one can define \emph{$W$-random graphs} \cite{LS08}. A sequence $(R_n)$ of $W$-random graphs is convergent with probability 1, and its limit is $W$.

In this paper we consider nondegenerate probability measures on $\SS^2$. For each pair of such probability measures $\mu_1$ and $\mu_2$, we have a $(\mu_1,\mu_2)$-random sequence of drawings $D_n$ of complete bipartite graphs $K_{n,n}$ and we consider their crossing graphs $X_n$. We prove that these sequences are convergent with probability 1 and discuss their homomorphism densities with the goal to better understand Conjecture \ref{conj:asymptotic Zarankiewicz conjecture}.

\begin{theorem}
\label{thm:convergence}
Let $\mu_1$ and $\mu_2$ be nondegenerate probability measures on $\SS^2$. Let $A_n$ and $B_n$ be a $\mu_1$-random and a $\mu_2$-random set of $n$ points in $\SS^2$, respectively, let $D_n$ be the corresponding $(\mu_1,\mu_2)$-random geodesic drawing of $K_{n,n}$ on parts $A_n$ and $B_n$, and let $X_n$ be its crossing graph. The sequence of graphs $(X_n)$ is convergent with probability $1$ and there is a graphon $W=W(\mu_1,\mu_2)$ that is the limit of this convergent sequence.
\end{theorem}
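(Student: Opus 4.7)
The plan is to invoke the standard characterization: $(X_n)$ is convergent if and only if the homomorphism density $t(H, X_n)$ converges for every fixed finite graph $H$. Work on the natural coupled probability space in which $A_n = \{a_1, \ldots, a_n\}$ and $B_n = \{b_1, \ldots, b_n\}$, with $a_1, a_2, \dots$ drawn i.i.d.\ from $\mu_1$ and $b_1, b_2, \dots$ drawn i.i.d.\ from $\mu_2$. For each fixed $H$, I would rewrite $t(H, X_n)$ as a bipartite $U$-statistic up to a negligible error, apply a concentration inequality to get almost-sure convergence to an explicit integral, and finally take a union bound over the countably many isomorphism classes of finite graphs. The main subtlety is the dependence structure of the kernel (each edge of $H$ already couples four points of $\SS^2$, and distinct edges of $H$ share arguments), but since the kernel is bounded, this is exactly what the $U$-statistic framework handles.

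Let $H$ have vertex set $\{v_1, \ldots, v_k\}$. A map $\phi : V(H) \to V(X_n)$ is encoded by a pair of functions $\sigma, \tau : [k] \to [n]$ via $\phi(v_i) = (a_{\sigma(i)}, b_{\tau(i)})$. Writing $\chi : (\SS^2)^4 \to \{0,1\}$ for the indicator that two geodesic arcs cross,
\begin{equation*}
 t(H, X_n) = \frac{1}{n^{2k}} \sum_{\sigma, \tau : [k] \to [n]} \prod_{v_i v_j \in E(H)} \chi(a_{\sigma(i)}, b_{\tau(i)}, a_{\sigma(j)}, b_{\tau(j)}).
\end{equation*}
I would split this sum according to whether both $\sigma$ and $\tau$ are injective. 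The non-injective pairs account for only $O(n^{2k-1})$ terms and hence contribute $O(1/n)$ to $t(H, X_n)$, which vanishes in the limit. The injective part is a $U$-statistic with bounded kernel
\begin{equation*}
 F(x_1, \ldots, x_k; y_1, \ldots, y_k) = \prod_{v_i v_j \in E(H)} \chi(x_i, y_i, x_j, y_j),
\end{equation*}
which by nondegeneracy of $\mu_1$ and $\mu_2$ is well-defined $\mu_1^k \otimes \mu_2^k$-almost everywhere.

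Altering any single sample point $a_i$ or $b_j$ changes the full sum by $O(n^{2k-1})$, so $t(H, X_n)$ changes by $O(1/n)$. McDiarmid's inequality then yields
\begin{equation*}
 \Pr\bigl[\,|t(H, X_n) - \EE\, t(H, X_n)| > \epsilon\,\bigr] \le 2 \exp(-c\, n\, \epsilon^2),
\end{equation*}
and $\EE\, t(H, X_n) \to p(H) := \int_{(\SS^2)^{2k}} F\, d\mu_1^k\, d\mu_2^k$. The Borel--Cantelli lemma then gives $t(H, X_n) \to p(H)$ almost surely. A union bound over the countably many isomorphism classes of $H$ shows that, with probability $1$, $t(H, X_n) \to p(H)$ for every $H$ simultaneously. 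Hence $(X_n)$ is a convergent graph sequence, and its graphon limit $W = W(\mu_1, \mu_2)$ exists by the Lov\'asz--Szegedy theorem.
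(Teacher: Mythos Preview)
Your proof is correct and follows the same high-level strategy as the paper: show that $t(H,X_n)\to p_H$ almost surely for each fixed $H$ via a concentration inequality and Borel--Cantelli, then use countable additivity over all $H$. The difference is in the concentration step. The paper computes the variance of $Y_H=\hom(H,X_n)$ directly by counting pairs $(\phi,\phi')$ whose images share at least two endpoints (there are $O(n^{4k-2})$ such pairs), obtains $\mathrm{var}(Y_H)=O(n^{4k-2})$, and then applies Chebyshev's inequality with a carefully chosen scale (taking the deviation threshold proportional to $n^{2k-1/4}$) so that the tail probabilities are summable. You bypass the variance calculation entirely by noting that resampling a single $a_i$ or $b_j$ changes $t(H,X_n)$ by at most $O(1/n)$ and invoking McDiarmid, which yields exponential tails $\exp(-cn\epsilon^2)$ rather than the polynomial decay $n^{-3/2}$ the paper extracts from Chebyshev. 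Your route is cleaner and avoids the somewhat artificial choice of the $n^{3/4}$ scale; the paper's variance computation, on the other hand, makes the dependence structure more explicit. One cosmetic remark: calling the injective part a ``$U$-statistic'' is slightly loose since the kernel $F$ is not symmetric in its arguments, but this is irrelevant because you apply McDiarmid to the full sum anyway, and McDiarmid requires only bounded differences, not any $U$-statistic structure.
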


Since the number of edges in the crossing graph corresponds to the number of crossings in $D_n$, we have 
$$
   t(K_2, X_n) = \frac{2|E(X_n)|}{|X_n|^2}  = \frac{2cr(D_n)}{n^4}.
$$
Thus, Theorem \ref{thm:Mo} shows a tight relationship with the asymptotic Zarankiewicz conjecture and can be expressed as follows.

\begin{theorem}
\label{thm:K2 density}
Let $\mu_1, \mu_2$ be nondegenerate antipodally-symmetric probability measures on $\SS^2$. Let $W=W(\mu_1,\mu_2)$ be the corresponding graphon of the sequence $(X_n)$ as defined above. Then $$t(K_2,W(\mu_1,\mu_2))=\frac{1}{8}.$$
\end{theorem}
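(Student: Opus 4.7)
The plan is to combine Theorem~\ref{thm:Mo} with Theorem~\ref{thm:convergence} through the deterministic identity
\[
t(K_2, X_n) = \frac{2|E(X_n)|}{|X_n|^2} = \frac{2\,\cro(D_n)}{n^4},
\]
which is already recorded in the excerpt and which follows because the vertices of the crossing graph are the $n^2$ edges of $K_{n,n}$ and the edges of $X_n$ are precisely the pairs of edges of $D_n$ that cross.

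First I would feed in the guarantee of Theorem~\ref{thm:Mo}: under the antipodal-symmetry hypothesis on $\mu_1,\mu_2$, we have $\cro(D_n) = Z(n,n)\bigl(1+o(1)\bigr)$ asymptotically almost surely. An elementary computation using the closed form of $Z(n,n)$ in \eqref{eq:Hill} shows that $Z(n,n)/n^4 \to \tfrac{1}{16}$ as $n \to \infty$, irrespective of parity. Combining these two facts with the identity above yields $t(K_2, X_n) \to \tfrac{2}{16} = \tfrac{1}{8}$ with probability $1$.

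Finally I would invoke Theorem~\ref{thm:convergence}: the random sequence $(X_n)$ converges with probability $1$ to the graphon $W=W(\mu_1,\mu_2)$, so by the definition of graphon convergence, $t(K_2, X_n) \to t(K_2, W)$ almost surely. Matching the two almost-sure limits forces $t(K_2, W) = \tfrac{1}{8}$.

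There is no real obstacle in this argument; the deep content (antipodal symmetry forcing the Zarankiewicz value to be attained asymptotically by random drawings, and the existence of the graphon limit in the first place) is already packaged into Theorems~\ref{thm:Mo} and \ref{thm:convergence}. The only bookkeeping is the algebraic identification of $t(K_2, X_n)$ with a normalized crossing count and the trivial asymptotic $Z(n,n)/n^4 \to \tfrac{1}{16}$, after which the conclusion is a matter of matching almost-sure limits of the same deterministic sequence.
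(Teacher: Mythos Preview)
Your proposal is correct and follows essentially the same route as the paper: the paper records the identity $t(K_2,X_n)=2\cro(D_n)/n^4$ and then presents Theorem~\ref{thm:K2 density} as a direct reformulation of Theorem~\ref{thm:Mo} together with the existence of the graphon limit from Theorem~\ref{thm:convergence}. The only minor caveat is terminological: Theorem~\ref{thm:Mo} gives an ``a.a.s.'' statement (convergence in probability), not literally almost-sure convergence, but since Theorem~\ref{thm:convergence} already guarantees that $t(K_2,X_n)$ has an almost-sure limit, matching it with the in-probability limit $\tfrac{1}{8}$ is immediate.
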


\subsection{Definitions}

We follow standard terminology from \cite{BM08,Di05} for graph theory and from \cite{SchBook18} for drawings of graphs. A drawing of a graph is \emph{good} if any two edges cross at most once, no two edges with a common endvertex cross, and no three edges cross at the same point. The first two conditions are clear when we consider geodesic drawings, and the third condition can always be satisfied if we make an infinitesimal perturbation. 

We say that a set of points on the unit sphere $\SS^2$ is \emph{in general position} if no two of the points are antipodal to each other, no three of them lie on the same great circle and no three geodesic arcs joining pairs of points cross at the same point. If $\mu$ is a nondegenerate probability distribution on $\SS^2$, then randomly chosen vertices will be in general position with probability 1.


\section{The proof of Theorem \ref{thm:convergence}}
\label{sec:proof Th2}

In the following we want to draw a comparison of subgraph densities of the crossing graphs $X_n$ to a concept similar to the Buffon Needle Problem (see, e.g. \cite{Isokawa99} or \cite{WeTr94}). We pick endpoints of segments randomly w.r.t. some probability distribution and consider the crossings formed by the segments. If the probability distribution is uniform on the sphere, it is equivalent as throwing a (bended) needle onto the sphere, where the needle length varies. Now considering a small number of such segments on the sphere we ask how they will cross each other.

Let $\mu_1$ and $\mu_2$ be nondegenerate probability measures on $\SS^2$. A \emph{$(\mu_1, \mu_2)$-random geodesic segment} is a geodesic segment $uv$ whose endpoints $u,v$ are chosen randomly w.r.t. $\mu_1$ and $\mu_2$, respectively. For a given graph $H$ of order $k=|H|$, we pick $k$ $(\mu_1, \mu_2)$-random geodesic segments on the sphere and look at the probability that $H$ is a subgraph of their intersection graph. Let $A=\{a_1, \dots, a_k\}$ be a $\mu_1$-random set of points in $\SS^2$ and $B=\{b_1, \dots,b_k\}$ be a $\mu_2$ random set of points in $\SS^2$. The segments we are considering are $a_1b_1,\dots,a_k b_k$. Note that the probability that $H$ is a subgraph of the intersection graph of $a_1b_1,\dots,a_k b_k$ depends on $\mu_1$ and $\mu_2$ only.

\begin{definition}
\label{def:pH}
Let $X$ be the intersection graph of $k$ $(\mu_1, \mu_2)$-random geodesic segments $a_1b_1,\dots,a_k b_k$ and let $H$ be a graph of order $k$. For a bijection $\phi:V(H) \to V(X)$ we define 
$$
p_H:=Pr[ \phi \text{ is a graph homomorphism}].
$$
\end{definition}
Observe that $p_H$ is independent of $\phi$, since the segments $a_ib_i$ ($i=1,\dots,k$) are selected independently. 

We want to compare the above model with another model where we pick $n \gg k$ points with respect to $\mu_1$ and $\mu_2$ each, and consider the corresponding crossing graph $X_n$ of a drawing $D_n$ of $K_{n,n}$. We will show that the models are closely related:  with growing $n$, picking $k$ vertices from $X_n$, they will with high probability come from $k$ independent geodesic segments and therefore represent $(\mu_1, \mu_2)$-random geodesic segments.  In the following we fix a graph $H$ and a mapping $\phi:V(H) \to V(X_n)$.

\begin{definition}
\label{def:y}
For given $X_n$, let $\phi: V(H)\to V(X_n)$ and we define the random variable $y_{H,\phi}$ on $X_n$ to be
\begin{align*}
    y_{H,\phi}(X_n) = \begin{cases}
       1 & \text{ if } \phi \text{ is a graph homomorphism }H\to X_n \\
       0 & \text{otherwise}
    \end{cases}
\end{align*}
and denote its expectation by
\begin{align*}
  E_\phi:= \EE[y_{H,\phi}].
\end{align*}
\end{definition}

Note that $E_\phi$ is not the same for every $\phi$. For example, if $H$ is a complete graph, then $E_\phi=0$ whenever $im(\phi)$ contains edges that share a vertex, as those edges never cross and hence are not adjacent in the crossing graph. 


\begin{lemma}
\label{lem:pH}
Let $(X_n)$ be a sequence of the crossing graphs of $(\mu_1,\mu_2)$-random geodesic drawings $D_n$ of $K_{n,n}$ for $n=1,2,\dots$, and let $H$ be a fixed graph of order $k$. Then
\begin{align*}
\lim_{n \to \infty} \ \frac{1}{\vert X_n\vert^k} \sum_{ \phi: V(H)\to V(X_n)} E_\phi = p_H.
\end{align*}
\end{lemma}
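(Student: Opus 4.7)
The plan is to split the sum $\sum_\phi E_\phi$ into two groups: those $\phi$ whose image is a set of $k$ edges of $K_{n,n}$ using $2k$ distinct endpoints in $D_n$, and those whose image uses some endpoint more than once. The first (``non-degenerate'') group will dominate, and thanks to independence of the underlying random points it will contribute exactly $p_H$ per term; the second (``degenerate'') group will be negligible because it is smaller by a factor of $n$.

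To implement this, I would parametrize each $\phi\colon V(H)\to V(X_n)$ by a pair of maps $f,g\colon V(H)\to [n]$ via $\phi(v)=a_{f(v)}b_{g(v)}$, and call $\phi$ \emph{non-degenerate} when both $f$ and $g$ are injective. The number of non-degenerate $\phi$ is $\bigl(n(n-1)\cdots(n-k+1)\bigr)^2=n^{2k}(1+o(1))$, so the number of degenerate $\phi$ is $O(n^{2k-1})$. The trivial bound $E_\phi\le 1$ then shows that the degenerate terms contribute at most $O(n^{2k-1})$ to the sum.

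The key step is to evaluate $E_\phi$ for a non-degenerate $\phi$. Whether $\phi$ is a homomorphism $H\to X_n$ depends only on the crossing pattern of the $k$ geodesic segments $\phi(v_1),\dots,\phi(v_k)$, and this pattern depends only on the $2k$ distinct endpoints $\{a_{f(v_i)}\}\cup\{b_{g(v_i)}\}$. Since the $a_i$'s are i.i.d.\ $\mu_1$ and the $b_j$'s are i.i.d.\ $\mu_2$, these $2k$ endpoints are exactly $k$ independent $\mu_1$-samples together with $k$ independent $\mu_2$-samples. This is precisely the setup defining $p_H$ in Definition~\ref{def:pH}; moreover, since the $k$ segments used to define $p_H$ are i.i.d., $p_H$ is insensitive to the chosen labelling of the $k$ segments, so we conclude $E_\phi=p_H$ exactly for every non-degenerate $\phi$.

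Combining the two contributions gives $\sum_\phi E_\phi = n^{2k}(1+o(1))\,p_H + O(n^{2k-1})$, and dividing by $|X_n|^k=n^{2k}$ yields $p_H+o(1)$, as required. The argument carries essentially no serious obstacle; the only subtle point is the clean identification of the non-degenerate case with the $p_H$ model, which uses both the independence of the sampled points $a_i,b_j$ and the fact that $p_H$ does not depend on the choice of bijection (a remark already made in the paper immediately after Definition~\ref{def:pH}).
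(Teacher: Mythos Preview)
Your argument is correct and follows essentially the same route as the paper: separate the maps $\phi$ according to whether the $k$ image edges use $2k$ distinct endpoints, observe that in the non-degenerate case $E_\phi=p_H$ exactly by independence of the sampled points, and bound the $O(n^{2k-1})$ degenerate maps trivially. Your write-up is in fact a bit more explicit (you spell out the parametrization by $(f,g)$ and the count $(n)_k^2$), but the underlying idea is identical.
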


\begin{proof}
Let $im(\phi)=\{v_1w_1,\dots,v_kw_k\}$. Then if $\left\vert \{v_1,\dots,v_{k},w_1,\dots,w_{k}\} \right\vert=2n$ we are in the setup of Definition $\ref{def:pH}$ and $\EE[y_{H,\phi}]=p_H$. Moreover, there are $O(n^{2k-1})$ choices for $\phi$ for which $\left\vert \{v_1,\dots,v_{k},\allowbreak w_1,\dots,w_{k}\} \right\vert<2n$  and the result follows.\qed
\end{proof}

Let us now consider the sum of the above defined random variables
\begin{equation}\label{eq:Y_H}
    Y_H:=\sum_{ \phi: V(H)\to V(X_n)} y_{H,\phi},
\end{equation}
and note that $Y_H(X_n)=hom(H,X_n)$ and $\EE[Y_H]=\sum_{ \phi: V(H)\to V(X_n)} E_\phi$. The aim is to show that $Y_H$ is in general not far from its expectation. This then gives us the tool to show the existence of $ \lim_{n \to \infty} \frac{\vert Y_H \vert}{\vert X_n \vert^k} = t(H)$ with probability 1.

\begin{proposition}\label{prop:varY_H}
Let\/ $Y_H$ be defined as in (\ref{eq:Y_H}). Then we have 
\begin{align*}
var(Y_H) =O(n^{4 k-2}).
\end{align*}
\end{proposition}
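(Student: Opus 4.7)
The plan is to expand the variance as a double sum of covariances,
\[
\operatorname{var}(Y_H) = \sum_{\phi,\psi} \operatorname{Cov}(y_{H,\phi}, y_{H,\psi}),
\]
where $\phi,\psi$ range over functions $V(H) \to V(X_n)$, and then exploit the mutual independence of the $2n$ random points $a_1,\dots,a_n,b_1,\dots,b_n$ to show that the vast majority of the $(n^2)^{2k}$ summands vanish. For each map $\phi$ I would write $\phi(v) = (\alpha_\phi(v),\beta_\phi(v))$ and set $U(\phi) := \alpha_\phi(V(H)) \cup \beta_\phi(V(H)) \subseteq A_n \cup B_n$, so that $|U(\phi)| \le 2k$ and $y_{H,\phi}$ is a measurable function of the positions of exactly the points in $U(\phi)$.

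The key independence observation is that whenever $U(\phi) \cap U(\psi) = \emptyset$, the indicators $y_{H,\phi}$ and $y_{H,\psi}$ depend on disjoint collections of mutually independent points, hence they are independent and contribute $0$ to the sum. It therefore suffices to control pairs with $U(\phi) \cap U(\psi) \neq \emptyset$, which I would stratify by the overlap size $s := |U(\phi) \cap U(\psi)| \ge 1$, refined further by how the shared points split between $A_n$ and $B_n$. For a fixed $\phi$, prescribing an $s$-subset of $U(\phi)$ as the intersection forces $\psi$ to realize those $s$ points among its at most $2k$ coordinates (a bounded combinatorial choice) and to fill the remaining $2k-s$ coordinates freely from $A_n \cup B_n$, which leaves $O(n^{2k-s})$ choices of $\psi$. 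Combined with $O(n^{2k})$ choices of $\phi$, this gives $O(n^{4k-s})$ pairs at overlap size $s$, and since $|\operatorname{Cov}|\le 1$ (the indicators are $\{0,1\}$-valued), the total contribution from overlaps of size $s \ge 2$ is already $O(n^{4k-2})$, as desired. A similar estimate absorbs the degenerate $\phi$ with $|U(\phi)|<2k$, of which there are only $O(n^{2k-1})$.

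The main obstacle is the case $s=1$, in which the naive pair count is $O(n^{4k-1})$, a factor of $n$ too large. Here the saving comes from a conditional independence argument: if $\phi$ and $\psi$ share only one random point $p$, then conditional on the position of $p$ the two indicators $y_{H,\phi},y_{H,\psi}$ depend on disjoint independent samples, so
\[
\operatorname{Cov}(y_{H,\phi},y_{H,\psi}) = \operatorname{Cov}\!\bigl(\EE[y_{H,\phi}\mid p],\,\EE[y_{H,\psi}\mid p]\bigr).
\]
The plan is to group such $(\phi,\psi)$ according to the identity of $p$ and the vertices of $V(H)$ that realize $p$ in each map, and then recognize the resulting inner sum as the variance of a conditionally centered aggregate of the $\phi$'s and $\psi$'s. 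Combined with Cauchy--Schwarz and the uniform bound $\EE[y_{H,\phi}\mid p]\in[0,1]$, this should confine the contribution of the $s=1$ stratum to $O(n^{4k-2})$ as well, completing the estimate. Turning this reorganization into a clean inequality, so that the extra factor of $n$ is actually recovered, is the delicate technical step of the proof.
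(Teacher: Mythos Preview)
Your covariance expansion and the counting for the $s\ge 2$ strata coincide with the paper's argument. The divergence is at $s=1$. The paper simply asserts that $y_{H,\phi}$ and $y_{H,\phi'}$ are independent whenever the endpoint sets $\{v_1,\dots,w_k\}$ and $\{v_1',\dots,w_k'\}$ share \emph{at most one} point of $A_n\cup B_n$; under that assertion only pairs with $s\ge 2$ survive, and the $O(n^{4k-2})$ bound drops out of the enumeration you already carried out. No conditional-expectation or Cauchy--Schwarz step appears.

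Your caution at $s=1$ is not misplaced, though. If $U(\phi)\cap U(\psi)=\{p\}$ then, as you note,
\[
\operatorname{Cov}(y_{H,\phi},y_{H,\psi})=\operatorname{Cov}\bigl(\EE[y_{H,\phi}\mid p],\,\EE[y_{H,\psi}\mid p]\bigr),
\]
and already for $H=K_2$ this equals $\operatorname{Var}\bigl(f(p)\bigr)$, where $f(p)$ is the probability that a segment from $p$ to a $\mu_2$-random point crosses an independent $(\mu_1,\mu_2)$-random segment. For generic nondegenerate $\mu_1,\mu_2$ there is no reason for $f$ to be constant, so these covariances are a fixed positive number and the $\Theta(n^{4k-1})$ pairs in the $s=1$ stratum do not cancel. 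Your proposed Cauchy--Schwarz reorganization does not recover the missing factor of $n$ either, since the conditional means are bounded but their pairwise covariances are all of the same sign. In short, the two write-ups part ways exactly at the step you flagged as delicate: the paper dispatches $s=1$ by an independence claim that is stronger than what disjointness of the remaining samples actually gives, while your sketch leaves that step open.
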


The proof of the proposition is in the appendix.

\begin{proof}[of Theorem \ref{thm:convergence}]
By Proposition \ref{prop:varY_H} and Chebyshev's inequality there exists a constant $C$ such that
\begin{align*}
    Pr\left[\vert Y_H - \EE[Y_H] \vert \geq k C n^{2k-1}  \right] \leq \frac{1}{k^2}.
\end{align*}
Now if we choose $k=k(n)$ appropriately such that $k(n) n^{-1}$ converges to zero and the sum $\sum_{n=1}^\infty \frac{1}{k(n)^2}$ is finite we can use the Borel-Cantelli Lemma. For example, we can choose $k=n^{3/4}$ and using Lemma $\ref{lem:pH}$ we get
\begin{align*}
    Pr\left[\left\vert\frac{\vert Y_H \vert}{\vert X_n \vert^{k}}-p_H\right\vert -\left\vert p_H-\frac{\EE[Y_H]}{\vert X_n \vert^{k}}\right\vert \geq \frac{  C n^{2k-1/4}}{\vert X_n \vert^{k}}  \right] &\leq \frac{1}{n^{3/2}}\\
   \implies Pr\left[\left\vert\frac{\vert Y_H \vert}{\vert X_n \vert^{k}}-p_H\right\vert \geq \frac{C'}{n^{1/4}}\right] &\leq \frac{1}{n^{3/2}}.\\
\end{align*}
for some constant $C'$. Then the Borel-Cantelli Lemma implies the following.

\begin{claim}
For each fixed $H$, $\frac{\vert Y_H \vert}{\vert X_n \vert^{k}} \to p_H:=t(H)$ with probability $1$.
\end{claim}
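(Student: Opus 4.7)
The plan is to invoke the Borel--Cantelli Lemma on the tail bound $\Pr\bigl[\bigl|\tfrac{|Y_H|}{|X_n|^k} - p_H\bigr| \geq C'/n^{1/4}\bigr] \leq 1/n^{3/2}$ that has just been obtained from Chebyshev's inequality (via Proposition~\ref{prop:varY_H}) together with the expectation estimate of Lemma~\ref{lem:pH}. All the analytic work has been absorbed into those two preceding statements, so the Claim itself reduces to a short probabilistic conclusion.

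First, I would note that $\sum_{n=1}^\infty n^{-3/2} < \infty$. Borel--Cantelli then implies that, with probability $1$, the event $\bigl\{\bigl|\tfrac{|Y_H|}{|X_n|^k} - p_H\bigr| \geq C'/n^{1/4}\bigr\}$ occurs for only finitely many $n$. Equivalently, for almost every realization of the random point sets $(A_n,B_n)_{n\ge 1}$ there exists an index $n_0$ (depending on the realization) such that $\bigl|\tfrac{|Y_H|}{|X_n|^k} - p_H\bigr| < C'/n^{1/4}$ for every $n\ge n_0$. Since $C'/n^{1/4}\to 0$, this yields $|Y_H|/|X_n|^k \to p_H$ almost surely, which is exactly the Claim, with $t(H):=p_H$.

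To finish the proof of Theorem~\ref{thm:convergence}, I would remark that the probability-$1$ convergence holds for each fixed graph $H$, and that there are only countably many isomorphism types of finite graphs. A countable intersection of probability-$1$ events still has probability $1$, so with probability $1$ the homomorphism densities $t(H,X_n)$ converge simultaneously for every finite graph $H$. By the characterization of convergent graph sequences in \cite{BCLSV06,BCLSV08,LovaszBook}, the sequence $(X_n)$ is almost surely convergent, and its limit graphon $W=W(\mu_1,\mu_2)$ satisfies $t(H,W)=p_H$ for every $H$.

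The main obstacle is not in this final step, which is a routine Borel--Cantelli application, but rather in the two supporting results it relies on: the variance bound $\mathrm{var}(Y_H)=O(n^{4k-2})$ in Proposition~\ref{prop:varY_H}, which requires grouping the pairs $(\phi,\phi')$ by the number of shared segments and exploiting independence for pairs with disjoint images, and the expectation identity of Lemma~\ref{lem:pH}, which isolates the dominant $(2n)^k$ term among the choices of $\phi$. Given those inputs, the argument above closes the proof cleanly.
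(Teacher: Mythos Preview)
Your proposal is correct and follows the paper's approach exactly: the paper simply states ``Then the Borel--Cantelli Lemma implies the following'' before the Claim, and afterwards uses the same countability argument to pass from each fixed $H$ to all $H$ simultaneously. You have merely spelled out the Borel--Cantelli step in more detail than the paper does.
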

Given that for each $H$, $t(H, X_n) \to t(H)$ with probability $1$, and since the probabilities are countably additive, it follows with probability $1$ that $t(H, X_n) \to t(H)$ for every $H$. Consequently, the sequence of random crossing graphs $(X_n)$ is convergent with probability $1$.\qed
\end{proof}

\section{Blowup of an antipodal drawing of $K_{4,4}$}

In the previous sections, we have established the existence of crossing graphons and determined densities $t(H)$ for $H=K_2$ if our measures $\mu_1,\mu_2$ are antipodally symmetric. Somewhat surprisingly, these edge densities are the same for any ``suitable'' measures $\mu_1,\mu_2$. It is natural to ask what happens with other homomorphism densities in these crossing graphons. 
The purpose of this section is to show that the homomorphism densities of triangles behave differently. To us, this was not {\it a priori} clear. We study a particular case of $(\mu_1,\mu_2)$-random drawings of complete bipartite graphs and determine $t(K_3)$ for the corresponding graphon $W(\mu_1,\mu_2)$.

In the following we fix a drawing $D_4$ of the complete bipartite graph $K_{4,4}$ where each part consists of two antipodal pairs of vertices on $\SS^2$ as in Figure $\ref{fig:sphere}$.

\begin{figure}
    \centering
\begin{tikzpicture}
   \pgfmathsetmacro\R{3} 
    \hspace*{-3cm}
\shadedraw[ball color = white,thick] (0,0) circle (\R);

        \foreach \angle[count=\n from 1] in {30,260} {

          \begin{scope}[rotate=\angle]
            \path[draw,dashed,name path global=d\n] (\R,0) arc [start angle=0,
              end angle=180,
              x radius=\R cm,
            y radius=1cm] ;
            \path[draw,thick,name path global=s\n] (-\R,0) arc [start angle=180,
              end angle=360,
              x radius=\R cm,
            y radius=1cm] ;
          \end{scope}
        }

          \begin{scope}[rotate=-40]
            \path[draw,thick,name path global=d3] (\R,0) arc [start angle=0,
              end angle=180,
              x radius=\R cm,
            y radius=0.5cm] ;
            \path[draw,dashed,name path global=s3] (-\R,0) arc [start angle=180,
              end angle=360,
              x radius= \R cm,
            y radius=0.5cm] ;
          \end{scope}

    \InterSec{s1}{s2}{I3} ;
    \InterSec{s1}{d3}{I2} ;
    \InterSec{d3}{s2}{I1} ;
    
    \InterSec{d1}{d2}{J3} ;
    \InterSec{d1}{d3}{J2} ;
    \InterSec{d3}{d2}{J1} ;    
    \fill[black] (I3) circle (0.05 cm) node[xshift=-0.25cm, yshift=0.15cm]{$v_2$};  
  \fill[black] (I1) circle (0.05 cm)node[xshift=-0.3cm, yshift=-0.1cm]{$w_2$};  ;  
      \fill[black] (3*cos{140}, 3*sin{140}) circle (0.05 cm) node[xshift=-0.2cm, yshift=0.2cm]{$\overline{v}_1$};  
            \fill[black] (3*cos{320}, 3*sin{320}) circle (0.05 cm) node[xshift=0.2cm, yshift=-0.2cm]{$v_1$}; 
                  \fill[black] (3*cos{210}, 3*sin{210}) circle (0.05 cm) node[xshift=-0.2cm, yshift=-0.2cm]{$\overline{w}_1$};  
            \fill[black] (3*cos{30}, 3*sin{30}) circle (0.05 cm) node[xshift=0.2cm, yshift=0.2cm]{$w_1$}; 
            \draw[thick, shift=(I1)] plot [domain=252:326] ({0.5*cos(\x)}, {0.5*sin(\x)});
            \node[xshift=0.1cm,yshift=-0.3cm] at (I1) {$\alpha$};
            \draw[thick, shift=(I3)]  plot [domain=15:90] ({0.5*cos(\x)}, {0.5*sin(\x)});
            \node[xshift=0.15cm,yshift=0.2cm] at (I3) {$\beta$};
            \draw[thick,white] plot [domain=55:110] ({0.6*cos(\x)+3*cos(320)}, {0.6*sin(\x)+3*sin(320)});
            \node[xshift=0.1cm,yshift=0.4cm,white] at (3*cos{320}, 3*sin{320}) {$\gamma$};
            \draw[thick,white]  plot [domain=256:360] ({0.8*cos(\x)+3*cos(30)}, {0.8*sin(\x)+3*sin(30)});
            \node[xshift=0.11cm,yshift=-0.6cm,white] at (3*cos{30}, 3*sin{30}) {$\delta$};

           \hspace*{6.4cm}
            
\node[inner sep=0pt] at (0,-0.3)
    {\includegraphics[width=.35\textwidth]{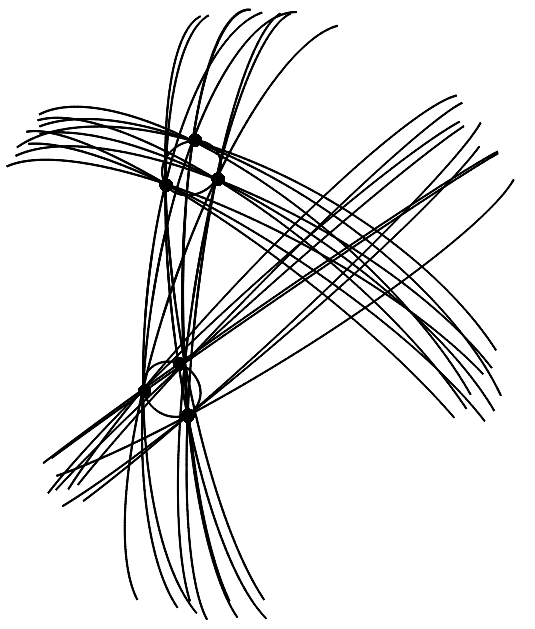}};
    
\end{tikzpicture}
\caption{The left part shows a drawing $D_4$ of a $K_{4,4}$ on parts $\{v_1,\overline{v}_1, v_2, \overline{v}_2\}$ and $\{w_1,\overline{w}_1, w_2, \overline{w}_2\}$. The angles $\alpha$ and $\beta$ are in the triangle formed by $w_2$, $v_2$ and a crossing, whereas $\gamma$ and $\delta$ are in a triangle formed by $v_1,w_1$ and the same crossing. The right-hand side shows part of a $D_4^{(3)}$ drawing with the circles of $w_2$ and $v_2$ each containing 3 vertices and with nine edges for each incident bundle emanating from these two nodes.}
\label{fig:sphere}
\end{figure}
    
We will be considering a \emph{blowup drawing} $D_4^{(n)}$ of $D_4$ for which we replace each vertex from $D_4$ with a circle of some small radius $r=r(n)$ that is centered at that vertex, and position $n$ evenly spaced vertices on that circle. These $n$ vertices will be referred to as the \emph{node} of the corresponding vertex of $K_{4,4}$. We also assume that all $8n$ vertices obtained in this way are in general position. In that way, each edge of $K_{4,4}$ is replaced by a complete bipartite graph between the corresponding nodes which we call the \emph{edge bundle}. This means for $N=4n$ that $D_4^{(n)}$ is a drawing of $K_{N,N}$. In what follows, we discuss the number of triangles in the intersection graph (of edges in $D_4^{(n)}$) when $n$ grows large. To simplify our discussion about triangles, we first classify the crossings in $D_4^{(n)}$.

\subsection{Types of crossings in $D_{4}^{(n)}$}\label{subsec:crkkn}

In the blowup drawing $D_4^{(n)}$, we distinguish three types of crossings, depending on what they stem from, as depicted in Figure \ref{fig:crossings}.

   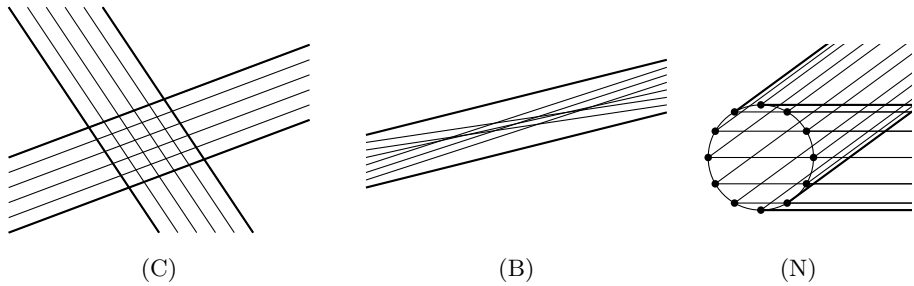
\begin{figure}[!ht]
       \centering
        \begin{tikzpicture}
                  \begin{scope}[xshift=-3cm]
        \foreach \i in {0,...,5} {
            \coordinate (N\i) at (\i/4,1);
            \coordinate (M\i) at (\i/4+2,-2);
            \coordinate (S\i) at (4,\i/5-0.5);
            \coordinate (P\i) at (0,\i/5-2);
             \draw (N\i) -- (M\i);
             \draw (S\i) -- (P\i); 
            }       
        \draw[thick] (N0) -- (M0);
        \draw[thick] (N5) -- (M5);
        \draw[thick] (P0) -- (S0);
        \draw[thick] (P5) -- (S5);
        \node at (2,-2.5){(C)};   
    \end{scope}
    \begin{scope}[xshift=1.75cm]
            \foreach \i in {1,...,8} {
            \coordinate (S\i) at (4,\i/10-0.5);
            \coordinate (P\i) at (0,\i/10-1.5);
            }      
        \draw[thick] (P1) -- (S1);
        \draw[thick] (P8) -- (S8);
        \draw (P2) -- (S5);
        \draw (P3) -- (S6);
        \draw (P4) -- (S7);
        \draw (P5) -- (S4);
        \draw (P6) -- (S2);
        \draw (P7) -- (S3);
        \node at (2,-2.5){(B)};   
    \end{scope}
    
        \begin{scope}[yshift=-1cm, xshift=7cm]
        \clip (-1,-1) rectangle (2,1.5);
        \draw (0,0) circle(0.7cm);
         \foreach \i in {1,...,12} {
            \coordinate (N\i) at (\i*360/12:0.7);
             \fill[black] (N\i) circle (0.05 cm);

           \draw (N\i) -- ([xshift=3cm] N\i); 
           \draw (N\i) -- ([xshift=3cm,yshift=2.2cm] N\i); 
                     }
          \draw[thick] (N4) -- ([xshift=3cm,yshift=2.2cm] N4);
          \draw[thick] (N10) -- ([xshift=3cm,yshift=2.2cm] N10);
          \draw[thick] (N3) -- ([xshift=3cm] N3); 
          \draw[thick] (N9) -- ([xshift=3cm] N9); 
        \end{scope}
                 \node at (7.5,-2.5){(N)}; 
        \end{tikzpicture}
        \caption{Possible crossings in the blow up: Bundle-bundle crossings (C), bundle crossings (B) and node crossings (N). }
        \label{fig:crossings}
    \end{figure}

Let us define these types (B), (C), and (N) more precisely and state their count. The corresponding counting process is described in the appendix. 

\begin{enumerate}
    \item[(C)] Two edge-bundles cross in a small neighborhood of a previous crossing in $D_{4}$. We call these \emph{bundle-bundle crossings} (C). Since each edge-bundle consists of $n^2$ edges, this gives $n^4$ bundle-bundle crossings for each  crossing in $D_4$. 

    \item[(B)] Two edges cross within a bundle. We call these \emph{bundle crossings} (B). Here we have $\binom{n}{2}^2$ crossings per bundle assuming $r(n) \ll n^{-1}$ and a suitable rotation of the circles.

    \item[(N)] Two edge-bundles cross at a node. We call these \emph{node crossings} (N).  Let $\alpha \in (0,\pi)$ be the angle between two incident edges $e,f$ in $D_4$ which were blown up to the edge-bundles, and let $\cro_\alpha$ be the resulting number of node crossings between the edges in the corresponding edge-bundles. Then we have: $\cro_\alpha + \cro_{\pi-\alpha} = \frac{n^3(n-1)}{2}$. 
\end{enumerate}

\subsection{Triangle densities in $D_{4}^{(n)}$}\label{subsec:trkkn}

The crossings in a triangle need to stem from bundle-bundle crossings (C), bundle crossings (B) or node crossings (N) as specified above. We first prove the following lemma.
\begin{lemma}
Let $D_4$ be a spherical drawing of a $K_{4,4}$ where each part consists of two pairs of antipodal vertices. Then no edge in $D_4$ is crossed twice. 
\end{lemma}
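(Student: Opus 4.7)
The plan is to fix, by the symmetry of the indexing, a single edge $e = v_1 w_1$ and show it is crossed at most once. The key geometric observation I would establish first is that any three points of $\SS^2$ containing an antipodal pair determine a unique great circle. Consequently, the four vertices $v_1, \overline{v}_1, w_1, \overline{w}_1$ all lie on a common great circle $C_{11}$, and analogously $v_2, \overline{v}_2, w_2, \overline{w}_2$ lie on a great circle $C_{22}$. Generic placement of the two antipodal pairs ensures $C_{11} \neq C_{22}$, for otherwise $w_2$ would lie on $C_{11}$.

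Next I would enumerate the nine edges non-incident with $e$, namely those joining $\{\overline{v}_1, v_2, \overline{v}_2\}$ to $\{\overline{w}_1, w_2, \overline{w}_2\}$, and rule out the five of them that touch $\overline{v}_1$ or $\overline{w}_1$. For an edge of the form $\overline{v}_1 u$ with $u \in \{w_2, \overline{w}_2\}$, its supporting great circle passes through $\overline{v}_1$ and hence also through the antipode $v_1$, so it meets $C_{11}$ exactly in $\{v_1, \overline{v}_1\}$; a short arc-length comparison using $d(v_1, \overline{v}_1) = \pi$ and the fact that each edge is a shorter arc of length strictly less than $\pi$ shows that neither $v_1$ nor $\overline{v}_1$ lies in both interiors simultaneously, so no crossing occurs. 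The same reasoning with the parts swapped handles the edges incident with $\overline{w}_1$. The edge $\overline{v}_1 \overline{w}_1$ needs separate treatment: its supporting great circle is $C_{11}$ itself, and it is the antipodal arc of $e$ on $C_{11}$, hence disjoint from the interior of $e$.

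That leaves the four edges $v_2 w_2, v_2 \overline{w}_2, \overline{v}_2 w_2, \overline{v}_2 \overline{w}_2$, all lying on $C_{22}$ and together partitioning $C_{22}$ into four arcs. Since $C_{11} \cap C_{22}$ consists of a single antipodal pair $\{p, -p\}$ and $e \subset C_{11}$ is an arc of length strictly less than $\pi$, $e$ meets $C_{22}$ in at most one point, which lies on at most one of the four edge-arcs. Hence at most one of these four edges crosses $e$, giving at most one crossing on $e$ in total. The main obstacle I expect is the bookkeeping in the second step, where one has to track which of the two antipodal points of $C_{11} \cap C'$ (for $C'$ the supporting great circle of one of those five edges) actually lies in the relevant shorter arcs, and handle the edge $\overline{v}_1 \overline{w}_1$ separately because its supporting great circle coincides with $C_{11}$; everything else reduces cleanly to the fact that two distinct great circles meet in exactly two antipodal points.
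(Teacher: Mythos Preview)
Your proposal is correct and follows essentially the same line as the paper's proof: reduce to the four edges joining $\{v_2,\overline{v}_2\}$ to $\{w_2,\overline{w}_2\}$, observe that they all lie on a single great circle, and conclude that at most one of them can meet $e$. The only difference is that the paper asserts in one sentence that these four edges are the only non-incident edges that could possibly cross $v_1w_1$, whereas you spell out why the five non-incident edges through $\overline{v}_1$ or $\overline{w}_1$ cannot cross $e$ (via the fact that their supporting great circles meet $C_{11}$ only at $\{v_1,\overline{v}_1\}$ or $\{w_1,\overline{w}_1\}$, points not interior to both arcs). Your version is more careful; the paper's is terser but relies on the same geometry.
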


\begin{proof}
Let the parts of the $K_{4,4}$ be $A=\{v_1, \overline{v_1}, v_2, \overline{v_2}\}$ and $B=\{w_1,\overline{w_1},w_2, \overline{w_2}\}$. Note that the edge $v_1w_1$ can only be crossed by an edge between the other antipodal pairs, i.e. $v_2w_2,v_2\overline{w_2},\overline{v_2}w_2,\overline{v_2}\overline{w_2}$. All of them lie on the great circle defined by $v_2w_2$ so in fact only one of these edges can cross $v_1w_1$. By symmetry the same holds for the other edges. \qed
\end{proof}

We classify the triangles in the intersection graph of the blowup drawing $D_{4}^{(n)}$ as follows. We assign each crossing (which is an edge in the intersection graph) a \emph{type} (C), (B), or (N) depending on whether it is a bundle-bundle, within bundle or a node crossing. We say a triangle $c_1c_2c_3$ is of \emph{type} $(l(c_1)l(c_2)l(c_3))$ where $l(c_i)$ is the type of crossing
$c_i$.

 \begin{figure}[!ht]
       \centering
        \includegraphics[width=0.5\textwidth]{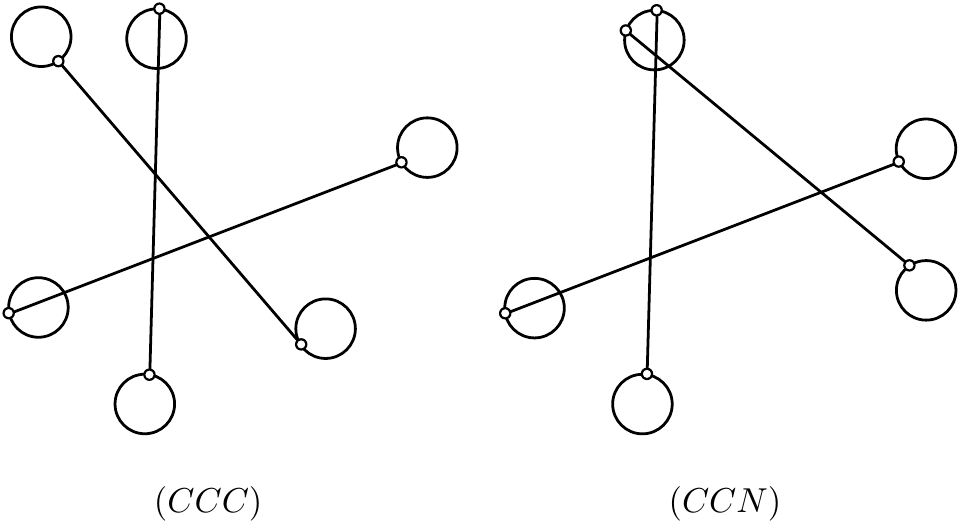}


        \caption{Triangles of type (CCC) and (CCN). }
        \label{fig:typescr}
    \end{figure}

The above lemma shows that there are no (CCC) or (CCN) triangles in $D_{4}^{(n)}$.
Also note that (CBB), (BBN) and (CBN) are not possible in general since BB suggests that all edges are from the same bundle and the bundled edges in (CBN) cross the third edge either at a node or at a bundle-bundle crossing but not at both. Triangles of type (NNN) either appear at three different nodes or at one node. However, we can not have (NNN) triangles at three different nodes since $K_{4,4}$ is bipartite and hence triangle-free. By the following lemma, the number of (NNN) triangles with all three crossings at one node is only of order $rn^6$ and can therefore be neglected.

\begin{figure}[!ht]
  \centering
  \includegraphics[width=0.8\textwidth]{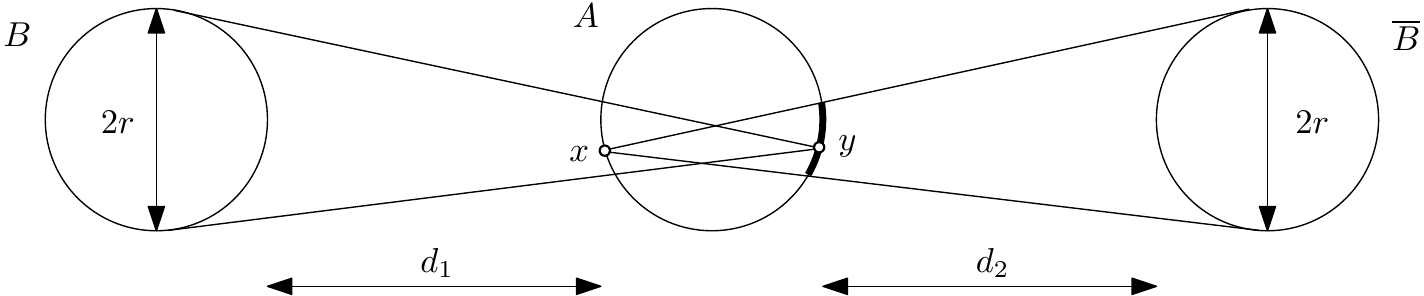}
  \caption{Two edges from node $A$ leading to antipodal nodes $B$ and $\overline{B}$ can cross. If $d=\min\{d_1,d_2\}$ and $r\le d$, then $|L_x| = O(rn)$.}
  \label{fig:NNNonenode}
\end{figure}

\begin{lemma}
\label{lem:no NNN at one node}
The number of (NNN) triangles in $D_4^{(n)}$ that correspond to three edges at the same node is $O(rn^6)$. Moreover, if $r(n) \ll n^{-1}$, there are no such triangles.
\end{lemma}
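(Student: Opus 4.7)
The plan is to combine a pigeonhole observation on the four bundles incident to a node with a tight geometric estimate on (N)-crossings between antipodal pairs of bundles.

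Fix a node $A$ of $D_4^{(n)}$, corresponding to some vertex of $D_4$. If three edges $e_1, e_2, e_3$ form an (NNN) triangle at $A$, they share $A$ as a common endpoint and they must lie in three distinct bundles at $A$: two edges inside a single bundle cross only as type (B) and not (N), while two edges in distinct bundles sharing $A$ correspond to incident edges of $D_4$ and hence cannot cross as (C). Now $A$ is incident to exactly four bundles, which come in two antipodal pairs $\{A \to B,\ A \to \overline{B}\}$ and $\{A \to B',\ A \to \overline{B}'\}$. By pigeonhole, any choice of three of these four bundles contains a full antipodal pair, so after relabelling we may assume $e_1 \in A \to B$ and $e_2 \in A \to \overline{B}$.

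The key geometric estimate, which is the content of Figure~\ref{fig:NNNonenode}, is that the set $L_{e_1}$ of edges $e \in A \to \overline{B}$ crossing $e_1$ near $A$ satisfies $|L_{e_1}| = O(rn)$. Working in local tangent coordinates at the centre of $A$, the small circle becomes a disk $\Delta$ of radius $r$; both $e_1$ and any $e \in A \to \overline{B}$ travel along the great circle through $A$, $B$, $\overline{B}$, and their chords inside $\Delta$ are therefore nearly parallel, differing only by an angle $\delta = O(r/d)$ produced by the specific choice of partner endpoint on the destination circle (where $d = \min\{d_1, d_2\}$ as in the figure). For two nearly parallel chords of length $O(r)$ inside a disk of radius $r$ to meet, their signed distances from the centre must differ by $O(r\delta) = O(r^2/d)$. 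Because the $n$ evenly spaced choices for the boundary endpoint of $e$ discretise that signed distance in steps of $\Theta(r/n)$, only $O(rn/d) = O(rn)$ of them satisfy the crossing condition.

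Given this estimate, the count is immediate. Up to a bounded number of choices I fix the ordered triple of bundles with the antipodal pair in the first two slots; then $e_1$ is chosen freely in $n^2$ ways, $e_2$ is chosen in $|L_{e_1}|\cdot n = O(rn^2)$ ways (the extra factor $n$ is the free partner endpoint on the destination circle of $e_2$), and $e_3$ is chosen in at most $n^2$ ways from the third bundle. Multiplying yields $O(n^2 \cdot rn^2 \cdot n^2) = O(rn^6)$ triangles per node, and summing over the $8$ nodes of $D_4$ preserves the bound. For the second part of the lemma, if $r(n) \ll n^{-1}$ then $|L_{e_1}| = O(rn) \to 0$; since $|L_{e_1}|$ is a non-negative integer, it equals $0$ for every $n$ large enough, so there are no (N)-crossings between antipodal bundles at $A$ at all, and by the pigeonhole observation no (NNN) triangle at $A$ can exist.

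The main obstacle is the geometric estimate $|L_{e_1}| = O(rn)$ itself. It requires quantifying the $O(r/d)$ tilt of a chord caused by choosing a specific partner endpoint rather than the centre of the destination node, and then comparing the resulting $O(r^2/d)$-wide crossing window in signed-distance space with the $\Theta(r/n)$ spacing of the discrete boundary vertices on $\Delta$. The hypothesis $r \le d$ appearing in the figure caption is used precisely to keep the chord length of order $r$ with a controlled constant and to justify the small-angle approximation for the tilt.
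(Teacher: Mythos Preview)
Your argument is correct and rests on the same geometric estimate as the paper, namely that the number of vertices $y$ in a node $A$ whose $\overline{B}$-edges can cross a fixed $B$-edge near $A$ is $O(rn)$; this is the paper's Lemma~\ref{lem: O(rn) bound} and the content of Figure~\ref{fig:NNNonenode}. Where you differ is in the organisation. You make the pigeonhole step explicit (any three of the four bundles at $A$ must contain an antipodal pair), which immediately forces two of the three edges into bundles $A\to B$ and $A\to\overline B$, and then you bound the third edge crudely by the full $n^2$ choices in its bundle. The paper instead works at the level of vertex triples $(x,y,z)$ in $A$, giving a finer bound $O((t+r)n)$ on the number of admissible third vertices $z$ in terms of the arc-distance $t$ between $x$ and $y$, and only then multiplies by $O(n^3)$ for the destination endpoints. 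Your cruder bound on the third edge is entirely sufficient for the $O(rn^6)$ conclusion, so the extra refinement in the paper is not needed here; your route is shorter.

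One small notational wrinkle: you define $L_{e_1}$ as a set of \emph{edges} in $A\to\overline B$, but your geometric argument and your later count (multiplying by an extra factor $n$ for the destination endpoint) both treat it as a set of \emph{boundary vertices} in $A$. This does not affect the final $O(rn^2)$ count for $e_2$, but it would read more cleanly if you defined $L_{e_1}$ as the set of starting vertices in $A$ from the outset, matching the paper's $L_x$.
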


\begin{proof}
Let us refer to Figure \ref{fig:NNNonenode} and consider the possibility that an edge incident with a vertex $y$ and leading to a node $B$ crosses an edge incident with a vertex $x$ that leads to the antipodal node $\overline{B}$. If the geodesics from $x$ to $\overline{B}$ intersect the circle $C_A$ corresponding to $A$, we denote by $L_x$ the set of vertices in $A$ that are on the smallest circular arc that contains those intersections.

Then it is easy to see that either $x\in L_y$ or $y\in L_x$ (or both as shown in the figure). It can be shown (details can be found in the full paper) that the number of cases where $y\in L_x$ or $x \in L_y$ is $O(rn)$. In particular, if $r \ll n^{-1}$ then $L_x$ is empty. For each such pair $x,y$, the number of vertices $z$ whose incident edges leading to a node different from $B$ and $\overline{B}$ make an (NNN) crossing triangle with two edges incident with $x$ and $y$, respectively, is
$O((t+r)n)$, where $t$ is the number of vertices on the arc between $x$ and $y$. We define the parameter $l$ which is the number of vertices in the node $A$ between $x$ and the lowest point on the circle of $A$ (assuming that $x$ is in the lower half of the circle and on the left side). Then $t\in [2l - \Theta(rn), 2l + \Theta(rn)]$. This gives the following upper bound for the number of such triples $(x,y,z)$:
$$
   4 \sum_{l=1}^{n/4} O(rn) O(2l+rn) = O(r n^3).
$$
Finally, since each such triple involves $O(n^3)$ triples of mutually crossing edges incident with $x,y,z$, we confirm that the number of considered (NNN) triangles is $O(rn^6)$.\qed
\end{proof}

We are left with the following four cases.

(CNN) We consider pairwise crossings of three edges such that two cross at a bundle-bundle crossing and the third edge crosses one edge each at one node each. These crossings depend on the angles $\alpha,\beta,\gamma,\delta$ as depicted in Figure \ref{fig:sphere}. By Section (N) in Appendix B the number of pairs of vertices $x,y$ such that all edges at angle $\alpha$ incident to $x$ cross all horizontal edges incident to $y$ is $\frac{\pi- \alpha}{2\pi}n^2+O(rn^2+n)$. It is easy to see that the number of crossings we get in the triangle including $\alpha$ and $\beta$ is $\left( \frac{\pi- \alpha}{2\pi}\right)\left(\frac{\pi- \beta}{2\pi}\right) n^6 +O(rn^6+n^5)$. We have a similar count for the angles $\gamma$ and $\delta$. Then we have to add three other contributions corresponding to other crossings in $D_4$. The antipodal crossing involves a triangles with $\alpha,\beta$ and $\gamma,\delta$, whereas the other two crossings involve triangles with $\alpha, \gamma$ and $\beta, \delta$. Overall, this gives $\frac{2}{n^2}(\cro_{\alpha}+\cro_{\delta})(\cro_{\gamma}+\cro_{\beta})+O(rn^6+n^5)$ triangles of this kind. \\
(BBB) We consider pairwise crossings of three edges such that all edges are from one bundle. For each bundle we get ${n\choose 3}^2+O(rn^6)$ such triangles by Section (B) in Appendix B. There are $16$ bundles so in total we have $\sim \frac{4}{9} n^6+O(rn^6)$ triangles of the type (BBB).

(CCB) We consider pairwise crossings of three edges such that two edges are in one bundle and cross the third edge at a bundle-bundle crossing. There are $2{n\choose 2}^2 n^2+O(rn^6)$ triangles per each crossing in $D_4$. We have $4$ crossings so in total $\sim 2n^6+O(rn^6)$ triangles of this kind.  

(BNN) We consider pairwise crossings of three edges such that two are in the same bundle and cross the third edge at a node.  The argument is analogous to the one for crossings of type $(N)$.  Starting at the top vertex, we enumerate the vertices clockwise along the cycle as in Figure~\ref{fig:small_alpha}.  We consider an edge at angle $\alpha$ which ends in the $i$-th vertex in part $(A)$ and its crossings to horizontal edges. From Section (N) in the Appendix $B$, we know that  $\vert S_i \vert =2i+O(rn)$, where $S_i$ is as defined there. We can choose from ${2i+O(rn) \choose 2}$ pairs of left endpoints and ${n \choose 2}$ pairs of right endpoints for a triangle. The number of triangles with an edge ending in $i$ and another edge ending in a vertex in $W_x= \{ y \in A \mid x \in L_y \}$ is of order $O(rn^4)$, where $L_y$ is defined as in the proof of Lemma~\ref{lem:no NNN at one node}. We consider now edges at angle $\alpha$ ending in a vertex $x$ in $(B)$. Note that $ \vert S_x \vert = \frac{\pi-\alpha}{\pi} n +O(rn)$. We can choose for any one of ${\frac{\pi-\alpha}{\pi} n +O(rn) \choose 2}$ pairs of left endpoints ${n \choose 2}$ pairs of right endpoints for a triangle. The number of triangles with another edge ending in a vertex in $W_x$ is of order $O(rn^4)$. The contribution of triangles from edges in $(C)$ is the same as for edges in $(A)$. Hence the number of triangles of type (BNN) is 
        \begin{align*}
        2 \left(n \sum_{i=1}^{(\pi -\alpha)n/2\pi}{2i \choose 2} \cdot {n \choose 2}\right)+ \left(\frac{\alpha}{2\pi} n^2\right) \cdot  {\frac{\pi-\alpha}{\pi} n \choose 2} {n \choose 2} +O(rn^6+n^5).
    \end{align*}

For $\alpha$ and $\pi-\alpha$ added together, this gives
\begin{align*}
  \frac{1}{12} \, n^6 - \frac{\alpha(\pi-\alpha)}{8 \pi^2} \, n^6 +O(rn^6+n^5).
\end{align*}
Now note that for two bundles at angle $\alpha$ we can choose one of the bundles to contain the bundled edges. This gives two options. At each node we have two pairs of bundles meeting at angle $\alpha$ and two pairs of bundles meeting at angle $\pi-\alpha$. (In addition to these possibilities we get further (BNN) triangles from two bundles at the same node that lead to antipodal nodes and correspond to the value of $\alpha=\pi$. They give only $O(rn^6+n^5)$ triangles.) If $\alpha, \beta,\gamma,\delta$ are the angles as in Figure \ref{fig:sphere}, the overall number of (BNN) triangles is
\begin{align*}
     \frac{\alpha(\alpha-\pi)+\beta(\beta-\pi)+\gamma(\gamma-\pi)+\delta(\delta-\pi)}{ \pi^2} \, n^6 +\frac{8}{3}n^6+O(rn^6+n^5).
\end{align*}

If we leave out smaller order terms, the total number of triangles in the intersection graph by summing up the number of (CNN), (BBB), (CCB) and (BNN) triangles is
\begin{align*}
 \frac{\alpha^2+\beta^2+\gamma^2+\delta^2-\pi(\alpha+\beta+\gamma+\delta)}{ \pi^2} n^6 +  \frac{(2\pi-\alpha-\delta)(2\pi-\gamma-\beta)}{2 \pi^2} n^6+\frac{46}{9} n^6.
   \end{align*}

\begin{theorem}
\label{thm:K3 blowup}
Given a drawing $D_4$ of a $K_{4,4}$ where each part has two antipodal pairs, let $D_{4}^{(n)}$ be the blowup drawing, and let $\alpha,\beta,\gamma,\delta$ be the angles defined above. Then the limiting triangle density $t(K_3)$ of the sequence $D_4^{(1)}, D_4^{(2)}, \dots$ is equal to
\begin{align*}
\frac{3}{2^{12}\pi^2}\Bigl((2\pi-\alpha-\delta)(2\pi-\gamma-\beta) +2 (\alpha^2+\beta^2+\gamma^2+\delta^2)-2\pi (\alpha+\beta+\gamma+\delta)\Bigr) \\
+ \frac{23}{3\cdot 2^{10}}+O(r).
\end{align*}
\end{theorem}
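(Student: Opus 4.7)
The plan is to assemble the limiting triangle density from the per-type counts already computed in Section \ref{subsec:trkkn} and verify the arithmetic. First, since $D_4^{(n)}$ is a drawing of $K_{N,N}$ with $N = 4n$, the crossing graph $X_N$ has $|X_N| = N^2 = 16 n^2$. Because $X_N$ is simple, every homomorphism $K_3 \to X_N$ is injective, so $hom(K_3, X_N) = 6 T_n$, where $T_n$ denotes the number of unordered triangles, and consequently
\[
t(K_3, X_N) \;=\; \frac{6\, T_n}{(16 n^2)^3} \;=\; \frac{6}{2^{12}}\cdot\frac{T_n}{n^6}.
\]

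Next, I would sum the four non-negligible contributions. All other triangle types either cannot occur (because no edge of $D_4$ is crossed twice and $K_{4,4}$ is triangle-free) or are $O(rn^6)$ by Lemma \ref{lem:no NNN at one node}. Substituting $\cro_\alpha = \tfrac{\pi - \alpha}{2\pi}\, n^4 + O(rn^4 + n^3)$ into the (CNN) formula $\tfrac{2}{n^2}(\cro_\alpha + \cro_\delta)(\cro_\gamma + \cro_\beta)$ yields $T_n^{(CNN)} = \tfrac{(2\pi - \alpha - \delta)(2\pi - \gamma - \beta)}{2\pi^2}\, n^6 + O(rn^6 + n^5)$. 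Combined with $T_n^{(BBB)} \sim \tfrac{4}{9}\, n^6$, $T_n^{(CCB)} \sim 2\, n^6$, and $T_n^{(BNN)} = \tfrac{\alpha^2 + \beta^2 + \gamma^2 + \delta^2 - \pi(\alpha + \beta + \gamma + \delta)}{\pi^2}\, n^6 + \tfrac{8}{3}\, n^6 + O(rn^6 + n^5)$, this produces
\[
T_n \,=\, \left(\frac{(2\pi-\alpha-\delta)(2\pi-\gamma-\beta)}{2\pi^2} + \frac{\alpha^2+\beta^2+\gamma^2+\delta^2 - \pi(\alpha+\beta+\gamma+\delta)}{\pi^2} + \frac{46}{9}\right) n^6 + O(rn^6 + n^5),
\]
where $\tfrac{46}{9} = \tfrac{4}{9} + 2 + \tfrac{8}{3}$ collects the three angle-independent constants.

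Finally, I would multiply by $\tfrac{6}{2^{12} n^6}$ and simplify. Placing the two angle-dependent fractions over the common denominator $2\pi^2$ extracts $\tfrac{3}{2^{12}\pi^2}$ as the overall coefficient of $(2\pi - \alpha - \delta)(2\pi - \gamma - \beta) + 2(\alpha^2 + \beta^2 + \gamma^2 + \delta^2) - 2\pi(\alpha + \beta + \gamma + \delta)$, while the constant term becomes $\tfrac{6}{2^{12}} \cdot \tfrac{46}{9} = \tfrac{276}{36864} = \tfrac{23}{3 \cdot 2^{10}}$. The residual error $\tfrac{6}{2^{12}} \cdot O(rn^6 + n^5)/n^6 = O(r) + O(n^{-1})$ collapses to $O(r)$ once $r \gg n^{-1}$, which is forced by the general-position requirement on the blowup. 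The main point requiring care is the (CNN) bookkeeping: the antipodal symmetry of $D_4$ is what makes antipodal bundle-bundle crossings share their angle pair, allowing the total to factor as $\tfrac{2}{n^2}(\cro_\alpha + \cro_\delta)(\cro_\gamma + \cro_\beta)$. Once this factorization is in hand, the rest is pure algebra.
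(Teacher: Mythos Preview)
Your proof is correct and follows the paper's own approach: assemble the per-type triangle counts from Section~\ref{subsec:trkkn} and normalize (the paper phrases the normalization as dividing $T_n$ by $\binom{16n^2}{3}=\tfrac{16^3}{6}n^6+O(n^5)$, which is asymptotically the same as your $6T_n/(16n^2)^3$). One small correction to your last sentence: the $O(n^{-1})$ error disappears simply because $t(K_3)$ is defined as the limit $n\to\infty$ with $r$ fixed, not because general position forces $r\gg n^{-1}$; the paper in fact explicitly entertains the regime $r(n)\ll n^{-1}$ elsewhere.
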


\begin{proof}
We have determined the number of triangles in the intersection graphs.
Dividing by the number of possible triangles in the intersection graph, ${16n^2 \choose 3} = \frac{16^3}{6} n^6+O(n^5)$, gives the triangle density. \qed
\end{proof}

\section{Blowups as graphons}

Finally, let us show that the crossing graphs of drawings $D_{4}^{(n)}$ can be interpreted as certain graphons. 

\begin{theorem}
For fixed $r>0$ let $\mu_1$ and $\mu_2$ be uniform distributions over two pairs of antipodal circles on $\SS^2$ of radius $r$ each and let $W(\mu_1,\mu_2)$ be the crossing graph limit of corresponding drawings. If we consider blow-up drawings $D_{4}^{(n)}$  w.r.t. the centers of the circles of radius $r$, then the crossing graphs of $D_4^{(n)}$ converge and their limit is the graphon $W(\mu_1,\mu_2)$.
\label{thm:graphons}
\end{theorem}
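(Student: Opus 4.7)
The plan is to reduce the statement to a weak-convergence assertion about endpoints of segments. Graphon convergence is characterized by convergence of homomorphism densities $t(H, \cdot)$ for every finite graph $H$, and by Theorem \ref{thm:convergence} and Lemma \ref{lem:pH} we know that $t(H, W(\mu_1,\mu_2))$ equals the probability $p_H$ of Definition \ref{def:pH}. Hence the theorem reduces to showing that
\[
\lim_{n\to\infty}\, t(H, X_n) = p_H
\]
for every finite graph $H$ of order $k$, where $X_n$ is the crossing graph of the deterministic drawing $D_4^{(n)}$ on $N=4n$ vertices per part.

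The density $t(H, X_n)$ equals the probability that a uniformly random map $\phi\colon V(H) \to E(D_4^{(n)})$ is a homomorphism into the intersection graph. Arguing as in Lemma \ref{lem:pH}, the maps $\phi$ whose image contains two edges of $K_{N,N}$ sharing a vertex contribute only an $O(1/n)$ error, so up to this error it suffices to study $k$ edges sampled independently and uniformly from $E(D_4^{(n)})$. Such a uniform edge is drawn in two stages: first one of the $16$ edge-bundles is picked uniformly at random, and then endpoints are picked independently and uniformly from the $n$ evenly-spaced points on each of the two associated circles of radius $r$. In the theorem's hypothesis a $(\mu_1,\mu_2)$-random geodesic segment is sampled by the very same two-stage procedure, the only difference being that endpoints are sampled from the continuous uniform measure on the circles rather than from the evenly-spaced discrete sets.

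Since $n$ evenly-spaced points on a circle equidistribute to the uniform measure on that circle, the law of a uniformly chosen edge of $D_4^{(n)}$ converges weakly to the law of a single $(\mu_1,\mu_2)$-random segment; by independence, the joint law of $k$ such edges converges weakly to the law of $k$ i.i.d.\ $(\mu_1,\mu_2)$-random segments. The event that their intersection pattern contains $\phi(H)$ is a bounded Borel function of the $2k$ endpoints whose set of discontinuities is contained in the non-generic configurations where some endpoints coincide, three endpoints lie on a common great circle, or three segments pass through a common point. Provided this exceptional set has measure zero under the product of uniform-on-circle measures, the portmanteau theorem gives $t(H, X_n) \to p_H$, which completes the proof.

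The main obstacle is justifying the negligibility of these discontinuities. Because each of the eight circles is a one-dimensional subset of $\SS^2$, the product of $2k$ uniform-on-circle measures is singular in $(\SS^2)^{2k}$, so the bad set cannot be dismissed by a general-position argument on the sphere and must be shown null directly on the circles themselves. This is a routine but essential check in spherical geometry: for each triple of the chosen circles, the configurations $(p,q,r)$ on them that lie on a common great circle (and analogously the triple-concurrency events for three geodesic segments) form a real-algebraic subset of strictly smaller dimension in the product of the circles, and hence a null set for the product of uniform circle measures, provided that the radius $r$ and the centers in $D_4$ are chosen generically.
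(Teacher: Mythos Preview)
Your argument is correct and is essentially the same as the paper's proof, phrased in the language of weak convergence and the portmanteau theorem rather than discretization and bounded convergence. The paper defines the nearest-point projection $\pi_n$ from each circle onto its $n$ evenly spaced points, sets $f_n:=f\circ g_n$ (with $g_n$ applying $\pi_n$ coordinatewise), claims $f_n\to f$ pointwise, and invokes the bounded convergence theorem to get $\int f_n\to\int f$; identifying $\int f_n$ with the blow-up density is exactly your step of replacing a uniformly random map $\phi$ by $k$ independent uniform edges up to an $O(1/n)$ error. Your version is arguably more careful: the paper's pointwise convergence assertion only holds off the boundary of $f^{-1}(1)$, so its bounded-convergence step tacitly uses the same null-set hypothesis you isolate explicitly. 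Your final caveat about genericity of $r$ and the centres is unnecessary, however: for any fixed configuration of the eight small circles, the collinearity and triple-concurrency loci are proper real-algebraic subsets of the product of circles and hence null for the product of uniform arc-length measures, with no genericity assumption needed.
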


\begin{proof}
All we need to show is that the density $t_1(H)$ in the random case limit and the density $t_2(H)$ of the blow-up drawing limit are the same for each graph $H$. Let $k = |H|$ be the number of vertices of $H$ and let $\phi: V(H)\to [k]$ be a bijection. For distinct points $x_1,\dots,x_k,y_1,\dots,y_k$ in $\SS^2$, let $X(x_1,\dots,x_k,y_1,\dots,y_k)$ be the intersection graph of the geodesic segments $x_1y_1,\dots,x_ky_k$.  Consider the following function
\begin{align*}
  f(x_1,\dots,x_k,y_1,\dots,y_k) = 
  \begin{cases}
   1, & \mbox{if $v\mapsto x_{\phi(v)}y_{\phi(v)}$ is a hom. $H\to X(x_1,\dots,y_k)$} \\
   0, & \mbox{otherwise.}
  \end{cases}
\end{align*}
Let $S_1$ and $S_2$ be the two circles on which $\mu_1$ and $\mu_2$ are defined, respectively. Since $f$ as defined above is measurable because $f^{-1}(1)$ is open, we can represent $t_1(H)$ as
\begin{align*}
    t_1(H)=\frac{1}{(8\pi r)^{k}}\int_{x \in S_1^n \times S_2^n} f(x) \, dx.
\end{align*}
In order to approximate $t_1(H)$ consider a set $C_n$ which consists of $n$ equidistant points on each of the cycles from $S_1,S_2$. Let $\pi_n: S_1\cup S_2 \to C_n$ be the function that maps a points from $S_1\cup S_2$ to its closest point in $X$. Let $g_n$ be a function  $g_n: (S_1\cup S_2)^{2n} \to (C_n)^{2n}$ that applies $\pi_n$ componentwise. Then $f_n=f \circ g_n$ converges pointwise to $f$ on $S_1^n \times S_2^n$. By the bounded convergence theorem  
\begin{align*}
    t_1(H)=\frac{1}{(8\pi r)^{k}} \int_{x \in S_1^n \times S_2^n} f(x) \, dx = \frac{1}{(8\pi r)^k} \lim_{n \to \infty} \int_{x \in S_1^n \times S_2^n} f_{n}(x) \, dx = t_2(H).\qed
\end{align*}
\end{proof}

The theorem shows that the same values for triangle densities in the $(\mu_1,\mu_2)$-random setting hold as for the blow-up limit in Theorem \ref{thm:K3 blowup}. 


\begin{theorem}
\label{thm:trianglebounds}
For fixed $r>0$ let $\mu_1$ and $\mu_2$ be uniform distributions over two pairs of antipodal circles on $\SS^2$ of radius $r$ each and let $W(\mu_1,\mu_2)$ be the crossing graph limit of the corresponding drawings. Then 
\begin{align*}
 \frac{83}{3\cdot 2^{12}} + O(r) \leq t(K_3, W(\mu_1,\mu_2)) \leq \frac{1}{3\cdot 2^{5}} + O(r),
\end{align*}
and these bounds are best possible. The limiting triangle density $t(K_3)$ depends on the angles $\alpha, \beta, \gamma, \delta$, and any value in the interval $\bigl(\frac{83}{12288}, \frac{128}{12288}\bigr)$ is possible.
\end{theorem}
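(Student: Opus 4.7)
The plan is to use Theorem \ref{thm:K3 blowup} (applicable via Theorem \ref{thm:graphons}) to reduce the problem to an explicit finite-dimensional optimization. Writing
$$T(\alpha,\beta,\gamma,\delta) = \frac{3\,F(\alpha,\beta,\gamma,\delta)}{2^{12}\pi^2} + \frac{23}{3 \cdot 2^{10}},$$
where $F$ is the quadratic polynomial from Theorem \ref{thm:K3 blowup}, it suffices to determine the range of $T$ as the quadruple of angles ranges over all realizable $D_4$. My first step is to rewrite $F$ in a form that exposes its extrema: setting $u = \alpha + \delta$ and $v = \beta + \gamma$, a direct expansion gives
$$F = (u + v - 2\pi)^2 - uv + (\alpha - \delta)^2 + (\beta - \gamma)^2,$$
splitting $F$ into a nonnegative centering term, a mixing term $-uv$, and two nonnegative symmetry defects.

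For the upper bound I consider a degenerate family of drawings in which $v_1, v_2, w_1, w_2$ cluster near a common point of $\SS^2$. Each spherical triangle at each crossing then shrinks to a point, so all four angles tend to $0$, whence $F \to 4\pi^2$ and
$$T \to \frac{3 \cdot 4\pi^2}{2^{12}\pi^2} + \frac{23}{3 \cdot 2^{10}} = \frac{36 + 92}{12288} = \frac{128}{12288} = \frac{1}{3 \cdot 2^5}.$$
This supremum is not attained, since any nondegenerate drawing has strictly positive angles.

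For the lower bound the key geometric input is the constraint $\alpha + \beta + \gamma + \delta \leq 2\pi$. Writing $\phi$ for the angle at the crossing $p$ common to the two triangles $(v_2, w_2, p)$ and $(v_1, w_1, p)$, their spherical excesses are $E_1 = \alpha + \beta + \phi - \pi$ and $E_2 = \gamma + \delta + \phi - \pi$, and analyzing the four-triangle tiling around $p$ (together with the fact that the four angles at $p$ sum to $2\pi$, and the two diagonal triangles $(v_1 v_2 p)$, $(w_1 w_2 p)$ have positive area) yields $E_1 + E_2 \leq 2\phi$, which rearranges to the desired bound. Granting $u + v \leq 2\pi$, we have $(u + v - 2\pi)^2 \geq 0$ and, by AM-GM, $uv \leq \bigl(\tfrac{u+v}{2}\bigr)^2 \leq \pi^2$, so $F \geq -\pi^2$. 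Equality forces $u = v = \pi$, $\alpha = \delta$, $\beta = \gamma$, i.e.\ $\alpha = \beta = \gamma = \delta = \pi/2$. This quadruple is realizable by an orthogonal symmetric configuration of the four vertices, and computation gives
$$T\!\left(\tfrac{\pi}{2}, \tfrac{\pi}{2}, \tfrac{\pi}{2}, \tfrac{\pi}{2}\right) = -\tfrac{3}{4096} + \tfrac{23}{3072} = \tfrac{83}{12288}.$$

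The intermediate values then follow from continuity. The configuration space of admissible 4-tuples of vertices on $\SS^2$ (modulo rotation) is path-connected, so the continuous image under $T$ is an interval, and combined with the established infimum and supremum, every value in the open interval $(\tfrac{83}{12288}, \tfrac{128}{12288})$ is attained. The main obstacle is justifying the constraint $\alpha + \beta + \gamma + \delta \leq 2\pi$: without it, the unconstrained minimum of $F$ is $-\tfrac{4\pi^2}{3}$, attained at $\alpha = \beta = \gamma = \delta = \tfrac{2\pi}{3}$, which would give the spurious density $\tfrac{80}{12288}$ and contradict the stated bound. Establishing this inequality rigorously via spherical trigonometry applied to the four-triangle decomposition around each crossing of $D_4$ is the technical heart of the proof.
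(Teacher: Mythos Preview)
Your overall strategy matches the paper's: reduce via Theorems~\ref{thm:graphons} and~\ref{thm:K3 blowup} to optimizing the explicit quadratic in $\alpha,\beta,\gamma,\delta$, use the constraint $\alpha+\beta+\gamma+\delta<2\pi$ for the lower bound, exhibit limiting configurations for both ends, and invoke continuity for the intermediate values. Your decomposition $F=(u+v-2\pi)^2-uv+(\alpha-\delta)^2+(\beta-\gamma)^2$ is correct and makes the lower bound very clean via AM--GM; the paper instead substitutes $\alpha=\tfrac{\pi}{2}+a$, etc., and writes the excess over $\tfrac{83}{12288}$ as a sum of squares plus $-\pi(a+b+c+d)\ge0$. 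These are equivalent.

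There is, however, a genuine gap on the upper side. You only show that $F\to4\pi^2$ along a degenerate family; you never prove that $F\le 4\pi^2$ for \emph{all} admissible quadruples, which is what the inequality $t(K_3)\le\tfrac{1}{3\cdot 2^5}+O(r)$ in the theorem asserts. Your decomposition does not give this for free: $(u+v-2\pi)^2$ and the two square terms are nonnegative, so $F\ge -uv$ is immediate, but no upper bound falls out. The paper handles this with the elementary estimates $2\alpha^2\le 2\pi\alpha$ (from $\alpha\le\pi$) and $(\alpha+\delta)(\beta+\gamma)\le 2\pi(\beta+\gamma)$, which combine to $F\le 4\pi^2 - 2\pi(\alpha+\delta)\le 4\pi^2$. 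You need some such step.

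Two smaller points. First, ``the triangle shrinks to a point, so all four angles tend to $0$'' is not valid reasoning: a shrinking spherical triangle becomes Euclidean, with angle sum $\pi$, not $0$. The paper's example sends $v_2\to v_1$ and $w_2\to w_1$ separately, so that $\alpha$ (the angle at $w_2$ between the directions to $v_1$ and $v_2$) genuinely collapses; clustering all four near one point does not force this. Second, by the strict inequality $\alpha+\beta+\gamma+\delta<2\pi$, the quadruple $\alpha=\beta=\gamma=\delta=\tfrac{\pi}{2}$ is \emph{not} realizable; you can only approach it, which is also what the paper does and is all that the open interval in the statement requires. Finally, for the angle-sum constraint itself, the paper's argument is short and worth adopting: the triangle $T=(w_2,v_2,c)$ is contained in $T'=(\bar v_1,\bar w_1,c)$ (whose base angles are $\pi-\gamma$ and $\pi-\delta$), and comparing spherical areas gives $\alpha+\beta<(\pi-\gamma)+(\pi-\delta)$. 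Your proposed route via $E_1+E_2\le 2\phi$ is, as written, just a restatement of the target inequality.
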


The proof is in the appendix.

\section{Conclusion}

It should be noted that the proofs of Theorem \ref{thm:convergence} and Theorem \ref{thm:K2 density} also extend to the case of the complete graph $K_n$ where we choose $n$ points from the sphere with respect to some antipodally symmetric probability measure $\mu$. 
(Let us observe that antipodal symmetry is needed for such a result.)
In Theorem \ref{thm:K3 blowup} the value $\frac{23}{3\cdot 2^{10}}=0.00748$ appears which is included in the interval given by Theorem \ref{thm:trianglebounds}. Numerical experiments show that the triangle density with respect to the uniform distribution is close to $0.0075$. This matches the mentioned special value from the blow-up setting. It would be of interest to study the crossing graph limit for drawings on the sphere of the complete graph or the complete bipartite graph when we restrict our probability measure to a uniform measure on the sphere. As Moon already showed in 1965 \cite{Moon65}, it holds asymptotically almost surely that $t(K_2)=\frac{1}{8}$, so it would be of interest to find a closed expression for $t(K_3)$. 

\bibliographystyle{splncs04}
\bibliography{biblio_geodesicS2}

\appendix

\section{Proof of Proposition \ref{prop:varY_H}}

\begin{proof}[of Proposition \ref{prop:varY_H}]
By definition
\begin{align}
var(Y_H)&=\EE[(Y_H-\EE[Y_H])^2] \nonumber \\
&=E\left[\left(\sum_{ \phi: V(H)\to V(X_n)} y_{H,\phi}-E_\phi\right)^2\right] \nonumber \\
&=\sum_{ \phi: V(H)\to V(X_n)} \sum_{ \phi': V(H)\to V(X_n)} \EE[(y_{H,\phi}-E_\phi) (y_{H,\phi'}-E_{\phi'})]. \label{eq:varY_H}
\end{align}
For independent variables $y_{H,\phi}$ and $y_{H,\phi'}$ the expectation $ \EE[(y_{H,\phi}-E_\phi) (y_{H,\phi'}-E_{\phi'})]$ equals zero so we only need to consider those pairs $\phi$ and $\phi'$ for which $y_{H,\phi}$ and $y_{H,\phi'}$ are dependent.

The events ``$\phi$ is a graph homomorphism $H\to X_n$" and ``$\phi'$ is a graph homomorphism $H\to X_n$" are independent if $im(\phi)=\{e_1,\dots,e_k\}=\{v_1w_1,\dots,v_kw_k\}$ and $im(\phi')=\{e'_1,\dots,e'_k\}=\{v'_1w'_1,\dots,v'_kw'_k\}$ satisfy
\begin{align*}
\left\vert\{v_1,\dots,v_{k},w_1,\dots,w_{k}\} \cap \{v'_1,\dots,v'_{k},w'_1,\dots,w'_{k}\}   \right\vert \leq 1.
\end{align*}
But note that for these sets to share at least two points we have $\binom{n}{2}$ choices for those two special points and at most $(n^{2k-2})^2$ for the remaining ones. The number of edges $(e_1,\dots,e_k)$ that can be formed by a set of vertices in $X_n$ $\{v_1,\dots,v_{k},w_1,\dots,w_{k}\}$ does not depend on $n$, so we have at most $O(n^{4k-2})$ pairs $y_{H,\phi}$ and $y_{H,\phi'}$ that are dependent as $\phi$ and $\phi'$ are defined by $(e_1,\dots,e_k)$ and $(e'_1,\dots,e'_k)$ only. Note that for each pair
\begin{align*}
   \left\vert \EE[(y_{H,\phi}-\mu_\phi) (y_{H,\phi'}-\mu_{\phi'})]\right\vert \leq 1
\end{align*}
since $\vert y_{H,\phi}(X_n)-\mu_\phi\vert\leq 1$ for any $X_n$. Summing up those expectations over the dependent variables, (\ref{eq:varY_H}) gives $var(Y_H)=O(n^{4k-2})$. \qed
\end{proof}

\section{Counting crossings of types (C), (B), and (N)}

To help us with counting crossings of type $(N)$ below, we first prove the following Lemma. 


\begin{lemma}
\label{lem: O(rn) bound}
Let $A$ and $B$ be two nodes corresponding to adjacent vertices in $D_4$ and let $x \in A$. If the geodesics from $x$ to $B$ intersect the circle $C_A$ corresponding to $A$, we denote by $L_x$ the set of vertices in $A$ that are on the smallest circular arc that contains those intersections.
Then $\vert L_x \vert = O(rn)$. Moreover, if $W_y = \{ x\in A \mid y \in L_x \}$, then $\vert W_y\vert = O(rn)$.
\end{lemma}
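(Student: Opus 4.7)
The plan is to exploit the fact that $r$ is small compared to the pairwise spherical distances between the original vertices of $D_4$, so that on the scale of $r$ the sphere is well approximated by the Euclidean plane tangent to $\SS^2$ at the center $a$ of $C_A$. Let $b$ denote the original vertex of $D_4$ whose node is $B$, let $d = |ab|$, and work in a planar frame at $a$ in which $b$ lies on the positive $x$-axis, so to leading order $C_A$ is the circle of radius $r$ about the origin and $b=(d,0)$. Everything below will ignore the spherical corrections to this picture, which enter at order $o(r)$ uniformly and do not affect the $O(rn)$ bound.

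First I would write $x=(r\cos\phi,r\sin\phi)$ and solve the quadratic that gives the second intersection of the line segment $xb$ with $C_A$. A direct computation yields a second intersection precisely when $\cos\phi < r/d$, and in that case its location is, to leading order, the reflection of $x$ across the $y$-axis (the line through $a$ perpendicular to $ab$), namely $(-r\cos\phi, r\sin\phi) + O(r^2/d)$. Next I would replace $b$ by an arbitrary point of $C_B$: since the direction from $x$ changes by at most $O(r/d)$ as the endpoint varies over $C_B$, the family of second intersections sweeps an arc on $C_A$ whose angular width is $O(r/d)=O(r)$.

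Because the $n$ vertices on $C_A$ are equally spaced with angular gap $2\pi/n$, any arc of angular width $O(r)$ contains at most $O(rn)$ vertices, proving $|L_x|=O(rn)$. For the second bound, the same reflection picture runs in reverse: the center of the exit arc of $x$ is an approximate reflection of $x$, so the condition that a fixed vertex $y$ belong to the exit arc of $x$ forces $x$ to lie within angular distance $O(r)$ of the reflection of $y$. Thus $W_y$ is contained in an arc on $C_A$ of angular width $O(r)$, which again yields $|W_y|=O(rn)$.

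The main obstacle is the tangent regime $\cos\phi \approx r/d$, where $x$ lies near one of the two points of $C_A$ at which a line from $b$ is tangent to the disk. There the two roots of the quadratic coalesce, the reflection description degenerates, and some geodesics from $x$ to $C_B$ fail to meet $C_A$ a second time. One has to verify that the range of $\phi$ in this transition zone, and the corresponding exit arcs, each have angular width $O(r)$ as well, so the vertex count contributed by this regime is still absorbed into the $O(rn)$ bound. The remainder is a routine Taylor-expansion check that the spherical corrections to the planar calculation are of order $o(r)$ uniformly in $\phi$.
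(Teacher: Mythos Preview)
Your proof is correct and follows the same geometric idea as the paper: the paper's argument simply asserts (with reference to a figure) that the arc containing $L_x$ has length $O(r^2/d)$, divides by the vertex spacing $2\pi r/n$ to get $O(rn)$, and then invokes the near-symmetry of the roles of $x$ and $y$ (since $d\gg r$) to obtain the same bound for $W_y$. Your explicit planar-frame computation with the reflection description is a more detailed route to the same $O(r^2/d)$ arc-length estimate, and your discussion of the tangent regime is a boundary case the paper does not mention; but the underlying approach is identical.
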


\begin{proof}
Note that the length of $L_x$ is $O(r^2/d)$, where $d$ is the distance from $A$ to $B$ (see Figure \ref{fig_O(rn)}). As we consider the distance $d$ to be constant, the number of vertices $y$ such that $y\in L_x$ is $O(r^2n/(2\pi r))=O(rn)$. Moreover, since the angles at $y$ and $x$, as shown in Figure \ref{fig_O(rn)}, are almost the same as $d$ is large compared to $r$, we also have $\vert W_y \vert=O(rn)$. \qed
\end{proof}

\begin{figure}[!ht]
       \centering
       \includegraphics[width=0.65\textwidth]{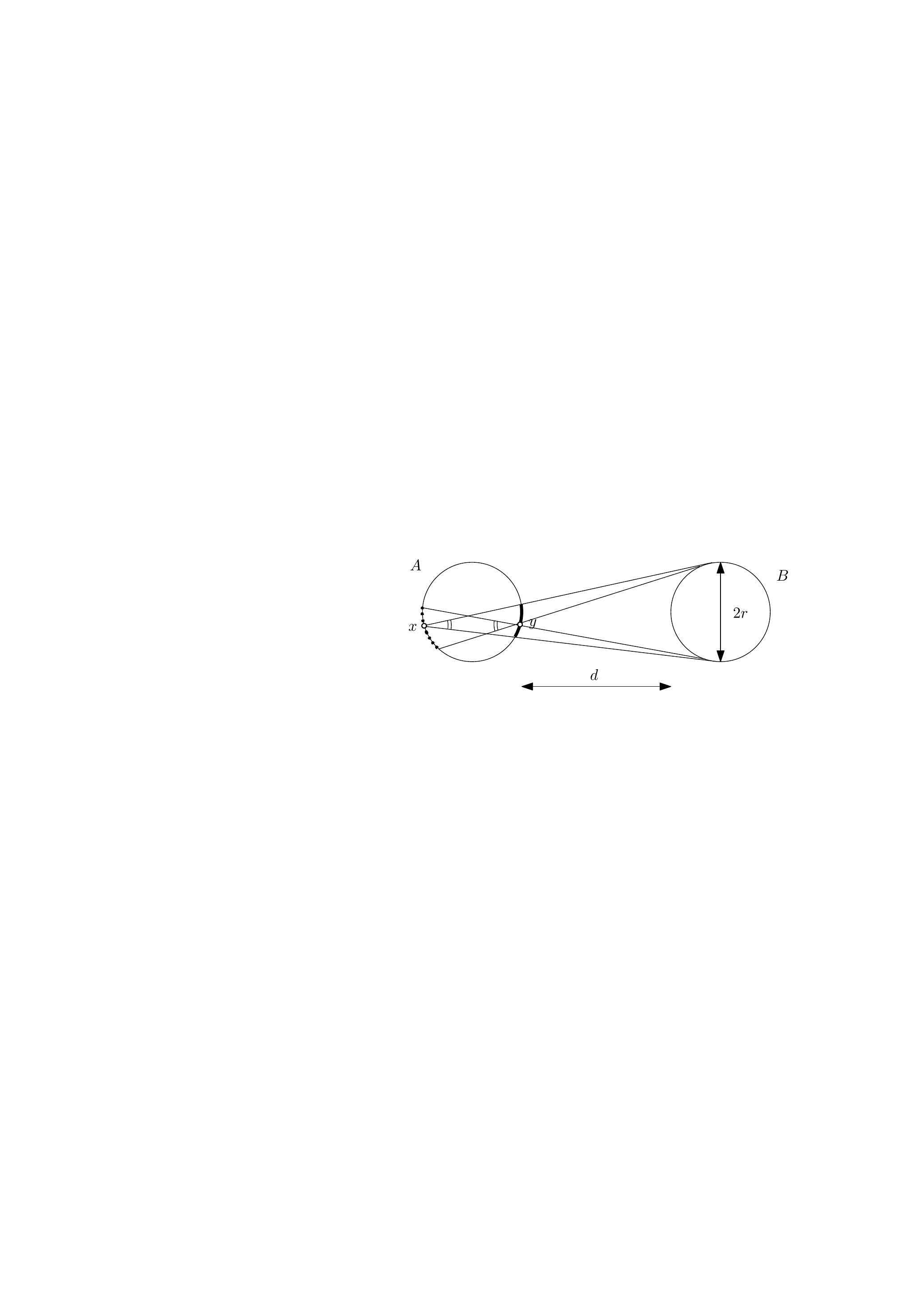}
       \caption{$L_x$ are the vertices on the arc between the extremal two edges leading from $x$ to $B$. The dashed arc in this figure contains vertices in $W_y$.  }
        \label{fig_O(rn)}
\end{figure}

In the following we discuss and count the crossings of each type.

\begin{enumerate}
    \item[(C)] Two edge-bundles cross in a small neighborhood of a previous crossing in  $D_{4}$. (We assume that $r(n)$ is small.) We call these \emph{bundle-bundle crossings} (C). Since each edge-bundle consists of $n^2$ edges, this gives $n^4$ bundle-bundle crossings for each  crossing in $D_4$. 
    \item[(B)] Two edges cross within a bundle. We call these \emph{bundle crossings} (B). Here we have $\binom{n}{2}^2$ crossings per bundle if $r(n) \ll n^{-1}$ and suitably rotated circles considering the following elementary argument:

\begin{claim}\label{cl:crKnm}
Let $D_{A,B}$ be the subdrawing of $D_4^{(n)}$ consisting of all edges between two nodes $A,B$ corresponding to two adjacent vertices of $D_4$. If $r(n) \ll n^{-1}$ and the cycles are suitably rotated, then $\cro(D)=\binom{n}{2}^2$.
\end{claim}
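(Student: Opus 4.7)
The plan is to reduce the spherical problem to a planar one and count crossings combinatorially.

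First, since $r$ is small and the geodesic distance $d$ between the centers $a,b$ of $C_A,C_B$ is a positive constant determined by $D_4$, I would work in a local chart of $\SS^2$ (e.g.\ the gnomonic chart with center near the midpoint of $ab$) in which $a=(0,0)$, $b=(d,0)$, and geodesic arcs between $C_A$ and $C_B$ become Euclidean straight segments, with the small circles $C_A,C_B$ mapping to near-circles of radius $r$. I would parametrize the vertices as $a_i=(r\cos\theta_i,\,r\sin\theta_i)$ and $b_j=(d+r\cos\phi_j,\,r\sin\phi_j)$ with $\theta_i=2\pi i/n+\epsilon_A$ and $\phi_j=2\pi j/n+\epsilon_B$, where $\epsilon_A,\epsilon_B$ encode the rotations of the two cycles.

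Next, I would establish a clean crossing criterion in this planar picture. For straight segments between two narrow vertical strips around $x=0$ and $x=d$, the $y$-coordinate of $a_ib_j$ interpolates linearly from $r\sin\theta_i$ at the left to $r\sin\phi_j$ at the right, so two segments $a_ib_j$ and $a_kb_l$ with $i\ne k$ and $j\ne l$ cross if and only if
\[
(\sin\theta_i-\sin\theta_k)(\sin\phi_j-\sin\phi_l)<0.
\]
``Suitably rotated'' then means that $\epsilon_A,\epsilon_B$ are chosen generically so that the $n$ values $\sin\theta_i$ are pairwise distinct and likewise the $\sin\phi_j$; this excludes only finitely many $\epsilon_A$ (and $\epsilon_B$), namely those producing the symmetry $\theta_i+\theta_j\equiv\pi\pmod{2\pi}$ for some $i\ne j$.

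The combinatorial count is then immediate: for each unordered pair $\{a_i,a_k\}\subseteq A$ and each unordered pair $\{b_j,b_l\}\subseteq B$, there are exactly two ways to match them into two non-incident edges, namely $\{a_ib_j,a_kb_l\}$ and $\{a_ib_l,a_kb_j\}$, and exactly one of them satisfies the sign criterion (swapping the $B$-labels flips the sign of the second factor). Summing over the $\binom{n}{2}^2$ choices of the two pairs yields $\binom{n}{2}^2$ crossings.

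The main obstacle is transferring this planar count faithfully to the spherical picture, and this is where the hypothesis $r\ll n^{-1}$ enters: I would show that, after a sufficiently generic rotation, the minimum gap between distinct values $r\sin\theta_i$ (and $r\sin\phi_j$) dominates both the quadratic distortion of the chart near $a$ and $b$ and the $O(r)$ $x$-offset that is absorbed into the linear interpolation, so that no individual crossing/non-crossing decision is altered when passing from the straight segments in the chart to the actual spherical geodesics. Once this error control is in place, we conclude $\cro(D_{A,B})=\binom{n}{2}^2$ exactly.
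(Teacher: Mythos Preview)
Your combinatorial count---each choice of two vertices from $A$ and two from $B$ yields exactly one crossing among the two possible matchings, hence $\binom{n}{2}^2$ in total---\emph{is} the paper's proof, which states precisely this in a single sentence and does not set up a chart, a coordinate parametrisation, or an explicit crossing test.

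The extra machinery you introduce is where a genuine issue hides. You plan to certify ``exactly one crossing per 4-tuple'' via the criterion $(\sin\theta_i-\sin\theta_k)(\sin\phi_j-\sin\phi_l)<0$ and to absorb the $x$-offsets as a perturbation dominated by the minimum gap among the $\sin\theta_i$. But for $n$ equally spaced angles that minimum gap is only $\Theta(n^{-2})$, irrespective of the rotation: writing
\[
\sin\theta_i-\sin\theta_k \;=\; 2\cos\!\Bigl(\tfrac{\theta_i+\theta_k}{2}\Bigr)\sin\!\Bigl(\tfrac{\theta_i-\theta_k}{2}\Bigr),
\]
the two consecutive vertices nearest $\theta=\pi/2$ make both factors of order $n^{-1}$. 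Since the height perturbation coming from the $O(r)$ $x$-offset is of order $r^2/d$, your error control actually requires $r\cdot n^{-2}\gg r^2$, i.e.\ $r\ll n^{-2}$, which is strictly stronger than the stated hypothesis $r\ll n^{-1}$. The clean way around this---implicit in the paper's one-line proof and made explicit in its later discussion of the sets $L_x$---is to skip the height projection and argue convex position directly: the only obstruction is that some $a_i$ lies in the thin wedge from $a_k$ toward $B$, which forces $\theta_i+\theta_k\equiv\pi\pmod{2\pi}$ to within $O(r)$. A suitable rotation keeps every $\theta_i+\theta_k$ at distance $\Theta(n^{-1})$ from $\pi\mathbb Z$, so $r\ll n^{-1}$ already rules this out, and ``one crossing per 4-tuple'' follows immediately.
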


\begin{proof}
Any 4-tuple of two vertices from $A$ and two vertices from $B$ determines precisely one crossing, and each crossing corresponds to precisely one such 4-tuple of vertices.
\qed
\end{proof}

If we drop the restriction $r(n) \ll n^{-1}$ and consider general $r$ the picture looks slightly different. Referring to Figure $\ref{fig_O(rn)}$ we can see that if $y \in L_x$ then the pair $x,y$ does not contribute (B) crossings with any pair of vertices in $B$. For another pair of vertices $w,z$ in $B$ we can see that the edges from $x$ to $w,z$ and from $y$ to the antipodals $\overline{w},\overline{z}$ contribute two crossings. Hence generally we have $\binom{n}{2}^2+O(rn^4)$ bundle crossings and $O(rn^4)$ additional node crossings.

    \item[(N)] Two edge-bundles cross at a node. We call these \emph{node crossings} (N).  Let $\alpha \in (0,\pi)$ be the angle between two incident edges $e,f$ in $D_4$ which were blown up to the edge-bundles, and let $\cro_\alpha$ be the resulting number of node crossings between the edges in the corresponding edge-bundles. We consider one bundle to be horizontal whereas the other bundle is counterclockwise at angle $\alpha$. We partition the edges from the bundle at angle $\alpha$ into four sets depending on which vertex in the node they are adjacent to. Starting at the top vertex, we enumerate the vertices clockwise along the cycle. The first $\frac{\pi-\alpha}{2\pi}n$ vertices\footnote{The numbers of nodes in each part are rounded up or down, but these changes will make our counts of crossings deviate only in a lower order term and can thus be neglected.} belong to part $(A)$, the next $\frac{\alpha}{2\pi}n$ vertices belong to part $(B)$, then we have $\frac{\pi-\alpha}{2\pi}n$ vertices belonging to part $(C)$ and the last $\frac{\alpha}{2\pi}n$ vertices belong to part $(D)$ as in 
    Figure~\ref{fig:small_alpha}. Assuming the circle radius $r$ is small enough, all edges in one bundle are almost parallel to each other up to an error term that depends on $r$. For each vertex $x$ we introduce two sets of vertices, $S_x$ and $W_x$. Let $y\in S_x$ if all horizontal edges incident with $y$ cross all edges at angle $\alpha$ that are incident with $x$, and let $W_x=\{ y \in A \mid x \in L_y \}$ where $L_y$ is defined as in Lemma \ref{lem:no NNN at one node} with respect to the horizontal edges. If $x$ is the $i$-th vertex in $(A)$ then $\vert S_x \vert = 2i+O(rn)$ and $\vert W_x \vert = O(rn)$. If $x$ is in $(B)$ then $\vert S_x \vert = \frac{\pi-\alpha}{\pi} n+O(rn)$ and $|W_x| = O(rn)$. If $x$ is in $(C)$ then a similar count as in $(A)$ applies if we enumerate those vertices starting at the last vertex in $(C)$. If $x \in (D)$ then $S_x$ is empty and $\vert W_x \vert = O(rn)$. Note that each pair $x,y$ such that $y \in S_x$ contributes $n^2$ crossing and if $y\in W_x$ then the contribution is $O(n^2)$ crossing. Finally the number of crossings is
    \begin{align*}
        \cro_\alpha(r) &= 2 \left( \sum_{i=1}^{(\pi -\alpha)n/2\pi} (2i+O(rn)) \cdot n^2\right) + \left(\frac{\alpha}{2\pi} n^2\right) \, \left(\frac{\pi-\alpha}{\pi}+O(rn)\right)\, n^2 \\
        & = \frac{\pi-\alpha}{2 \pi}\cdot n^4 + O(rn^4+n^3).
        \end{align*}

\begin{figure}[!ht]
    \centering
\begin{tikzpicture}
            \draw (0,0) circle(2.2cm);
            \clip (-3,-3) rectangle (3,3);
            \coordinate (N1) at (0*360/4:2.2);
            \coordinate (M1) at (3*360/16:2.2);
            \coordinate (N2) at (1*360/4:2.2);
            \coordinate (M2) at (7*360/16:2.2);
            \coordinate (N3) at (2*360/4:2.2);
            \coordinate (M3) at (11*360/16:2.2);
            \coordinate (N4) at (3*360/4:2.2);
            \coordinate (M4) at (15*360/16:2.2);
            \foreach \i in {1,2,3,4}{
             \fill[black] (N\i) circle (0.05 cm);
             \fill[black] (M\i) circle (0.05 cm);
             }
            \coordinate (N11) at (0*360/4:3.2);
            \coordinate (M11) at (3*360/16:3.2);
            \coordinate (N12) at (1*360/4:3.2);
            \coordinate (M12) at (7*360/16:3.2);
            \coordinate (N13) at (2*360/4:3.2);
            \coordinate (M13) at (11*360/16:3.2);
            \coordinate (N14) at (3*360/4:3.2);
            \coordinate (M14) at (15*360/16:3.2); 
            \draw (N11) -- (N13);
            \draw (M11) -- (M13);
            \draw (N12) -- (N14);
            \draw (M12) -- (M14);
            \draw [red,line width=3pt,domain=-22.5:90] plot ({2.2*cos(\x)}, {2.2*sin(\x)});
            \draw [red,line width=3pt,domain=157.5:270] plot ({2.2*cos(\x)}, {2.2*sin(\x)});
            \draw [blue,ultra thick,dotted,domain=270:337.5] plot ({2.2*cos(\x)}, {2.2*sin(\x)});
            \node[red] at (2.2,1.6) {$(A)$};      
            \node[red] at (-2.2,-1.6) {$(C)$};  
            \node[blue] at (1.6,-2.2) {$(B)$}; 
            \node[black] at (-1.6,2.2) {$(D)$};
            \node at (1.5*360/16:0.6){$\alpha$};   
            \draw [thick,domain=0:67.5] plot ({cos(\x)}, {sin(\x)});
\end{tikzpicture}
    \caption{$(A)$, $(B)$, $(C)$ and $(D)$ are special areas of vertices of a node. An illustration where $0 < \alpha \leq \frac{\pi}2$.}
    \label{fig:small_alpha}
\end{figure}
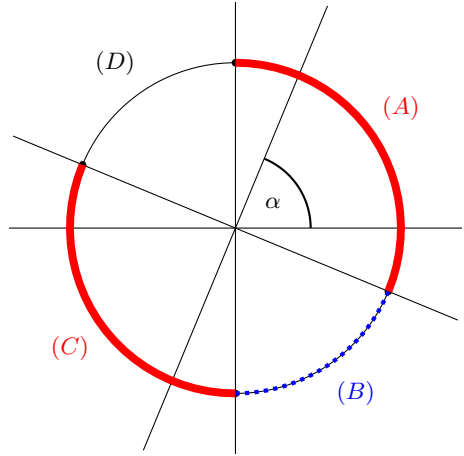

\begin{figure}[!ht]
       \centering
       \includegraphics[width=0.48\textwidth]{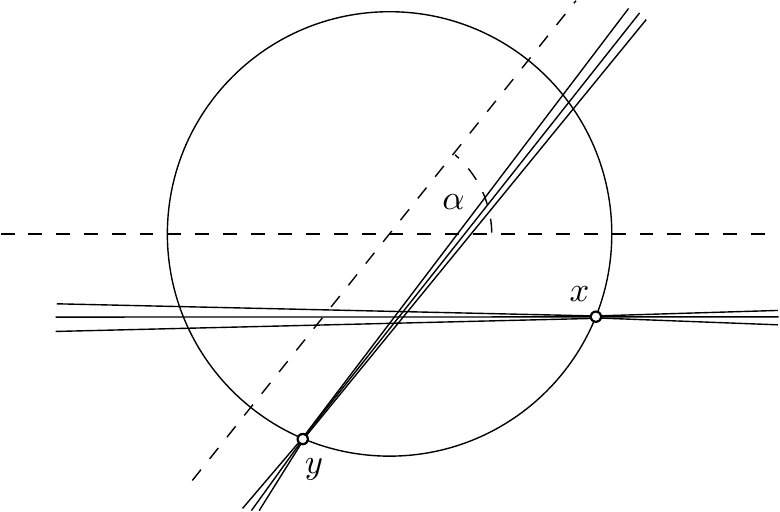}
       \caption{An illustration of the antipodal argument for the total number of node crossings at one of the nodes. The edges incident with any $x$ and $y$ ($x\ne y$) yield precisely $n^2$ node crossings.}
        \label{fig:antipodalargument}
\end{figure}
    
Let us observe that $\cro_\alpha + \cro_{\pi-\alpha} = \frac{n^3(n-1)}{2}$. This formula, which is exact, can be obtained directly by considering all four bundles arriving to a node as shown in Figure~\ref{fig:antipodalargument}. It is apparent from the figure that the total number of crossings, which is equal to $2\cro_\alpha + 2\cro_{\pi-\alpha}$, can be counted by considering any ordered pair $(x,y)$ of distinct vertices in the node and observing that the horizontal edges incident with $x$ and the angle $\alpha$ edges incident with $y$ leaving in both directions from each vertex yield $n^2$ crossings. 
\end{enumerate}

\section{Proof of Theorem \ref{thm:trianglebounds}}

\begin{lemma}
Let $\alpha, \beta, \gamma, \delta$ be as in Figure \ref{fig:sphere}. Then $\alpha+\beta+\gamma+\delta < 2\pi$. 
\label{lem:anglesum}
\end{lemma}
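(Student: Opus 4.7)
The plan is to translate the inequality via Girard's theorem. Let $p$ denote the crossing of the edges $v_2 w_1$ and $v_1 w_2$, and let $\theta \in (0, \pi)$ be the angle at $p$ inside the triangle $T_1 = v_2 w_2 p$; by the vertical angle identity, this is also the angle at $p$ inside $T_2 = v_1 w_1 p$. Girard's formula gives
\[
\mathrm{Area}(T_1) = \alpha + \beta + \theta - \pi, \qquad \mathrm{Area}(T_2) = \gamma + \delta + \theta - \pi,
\]
so the claim is equivalent to $\mathrm{Area}(T_1) + \mathrm{Area}(T_2) < 2\theta$.

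Next, I would use the lune structure: the great circles $C_1, C_2$ carrying the two crossing edges cut the sphere into four lunes, two of which have opening $\theta$ and area $2\theta$. Call these two antipodal $\theta$-lunes $L$ and $L'$. The triangle $T_1$ lies in $L$ (spanned at $p$ by the directions toward $v_2, w_2$), while $T_2$ lies in $L'$ (the opposite lune, spanned by the directions toward $v_1, w_1$). Applying the antipodal map to $T_1$ yields a congruent triangle $\overline{T}_1$, contained in $L'$, with the same area as $T_1$.

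The crux is to show that $\overline{T}_1$ and $T_2$ have disjoint interiors inside $L'$. Because $v_2 w_1$ and $v_1 w_2$ are geodesic segments of length less than $\pi$, writing $a, b, c, e$ for the spherical distances from $p$ to $v_2, w_1, w_2, v_1$ respectively, we have $a+b<\pi$ and $c+e<\pi$. Traversing $\partial L'$ from $p$ to $\overline{p}$ along the $C_1$-arc, one meets $w_1$ (at distance $b$) before $\overline{v}_2$ (at distance $\pi-a$); similarly along the $C_2$-arc, $v_1$ (at distance $e$) precedes $\overline{w}_2$ (at distance $\pi-c$). Hence the cyclic order around $\partial L'$ of the four third-side endpoints is $w_1,\, \overline{v}_2,\, \overline{w}_2,\, v_1$. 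The third side of $T_2$ is the chord $v_1 w_1$ (positions $4$ and $1$), and the third side of $\overline{T}_1$ is the chord $\overline{v}_2\,\overline{w}_2$ (positions $2$ and $3$); these two chord pairs are non-interleaving. A Jordan-curve argument in the simply connected region $L'$ then shows that two geodesic chords with non-interleaving endpoints on the boundary cannot cross, so $\overline{T}_1$ and $T_2$ are interior-disjoint and indeed separated by a non-empty middle strip in $L'$.

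Combining the above gives
\[
\mathrm{Area}(T_1) + \mathrm{Area}(T_2) \;=\; \mathrm{Area}(\overline{T}_1) + \mathrm{Area}(T_2) \;<\; \mathrm{Area}(L') \;=\; 2\theta,
\]
which is the required inequality. The main obstacle is the cyclic-order/non-interleaving verification, which rests precisely on the length bound $a+b,\, c+e<\pi$ for the two crossing edges.
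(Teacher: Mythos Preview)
Your argument is correct. Both your proof and the paper's rest on Girard's theorem together with the same underlying geometric fact, but the framing differs. The paper works entirely inside the lune $L$ containing $T_1$: it introduces the triangle $T' = c\,\overline{v}_1\,\overline{w}_1$ (which is precisely $L\setminus\overline{T}_2$), observes directly that $T_1\subset T'$, and then compares angle sums, obtaining $\alpha+\beta < (\pi-\gamma)+(\pi-\delta)$ in one line. Your version applies the antipodal map to $T_1$ instead and argues that $\overline{T}_1$ and $T_2$ pack disjointly into the opposite lune $L'$; since the antipodal map takes $L\to L'$, $T_1\to\overline{T}_1$, and $\overline{T}_2\to T_2$, your disjointness claim is exactly the paper's containment claim transported to $L'$. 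What your write-up adds is an explicit justification of this step via the cyclic order of $w_1,\overline{v}_2,\overline{w}_2,v_1$ on $\partial L'$, pinpointing where the edge-length bounds $a+b<\pi$ and $c+e<\pi$ enter. The paper's formulation is shorter; yours makes the role of the ``geodesic arcs have length $<\pi$'' hypothesis more transparent.
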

\begin{proof}
We refer to Figure $\ref{fig:sphere}$. Let $T$ be the triangle formed by $w_2$,$v_2$ and the crossing $c$ of the segments $w_2v_1$ and $v_2w_1$. Let $T'$ be the triangle formed by $\overline{v}_1, \overline{w}_1$ and the crossing $c$ and note that $T'$ contains $T$. Note that the angle $a$ at $\overline{v}_1$ and the angle $b$ at $\overline{w}_1$ in $T'$ are $a=\pi-\gamma$ and $b=\pi-\beta$. As $T$ is within $T'$ its area is smaller and hence its angular defect is smaller which is proportional to the angle sum. This tells us that $\alpha+\beta < a+b=2\pi-\gamma-\delta$ which proves the lemma.\qed
\end{proof}

\begin{proof}[of Theorem \ref{thm:trianglebounds}]
By Theorem \ref{thm:graphons} we can refer to Theorem \ref{thm:K3 blowup} to find the extremal bounds. We give a proof for the upper bound first, which is attained if all angles are close to zero as in Figure \ref{fig:sphere2}. This is optimal since rewriting the equation from Theorem \ref{thm:K3 blowup} gives
\begin{align*}
\frac{3}{2^{12}\pi^2} & ( 4\pi^2- 4\pi(\alpha+\beta+\gamma+\delta) + (\alpha+\delta)(\gamma+\beta)+2\alpha^2+2\beta^2+2\gamma^2+2\delta^2) \\
& ~ + \frac{23}{3\cdot 2^{10}}+O(r) \\
\leq &~ \frac{3}{2^{12}\pi^2} (- 4\pi(\alpha+\beta+\gamma+\delta)+(2 \pi)(\gamma+\beta) + 2\pi \alpha+2\pi \beta+ 2\pi \gamma+2\pi \delta) \\
&~ + \frac{1}{3\cdot 2^{5}}+O(r) \\
\leq &~ \frac{1}{3\cdot 2^{5}}+O(r).
\end{align*}

    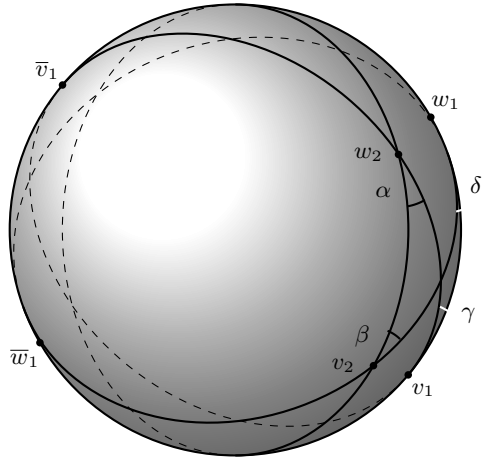
\begin{figure}
        \centering
\begin{tikzpicture}
   \pgfmathsetmacro\R{3} 
\shadedraw[ball color = white,thick] (0,0) circle (\R);

        \foreach \angle[count=\n from 1] in {20} {

          \begin{scope}[rotate=\angle]
            \path[draw,dashed,name path global=d\n] (\R,0) arc [start angle=0,
              end angle=180,
              x radius=\R cm,
            y radius=2.5cm] ;
            \path[draw,thick,name path global=s\n] (-\R,0) arc [start angle=180,
              end angle=360,
              x radius=\R cm,
            y radius=2.5cm] ;
          \end{scope}
        }

  \foreach \angle[count=\n from 2] in {-40,270
  } {
          \begin{scope}[rotate=\angle]
            \path[draw,thick,name path global=s\n] (\R,0) arc [start angle=0,
              end angle=180,
              x radius=\R cm,
            y radius=2.3cm] ;
            \path[draw,dashed,name path global=d\n] (-\R,0) arc [start angle=180,
              end angle=360,
              x radius= \R cm,
            y radius=2.3cm] ;
          \end{scope}
          }

    \InterSec{s1}{s3}{I3} ;
    \InterSec{s1}{s2}{I2} ;
    \InterSec{s3}{s2}{I1} ;
    
    \InterSec{d1}{d2}{J3} ;
    \InterSec{d1}{d3}{J2} ;
    \InterSec{d3}{d2}{J1} ;    
    \fill[black] (I3) circle (0.05 cm) node[xshift=-0.4cm]{$v_2$};  
  \fill[black] (I1) circle (0.05 cm)node[xshift=-0.4cm]{$w_2$};  ;  
      \fill[black] (3*cos{140}, 3*sin{140}) circle (0.05 cm) node[xshift=-0.2cm, yshift=0.2cm]{$\overline{v}_1$};  
            \fill[black] (3*cos{320}, 3*sin{320}) circle (0.05 cm) node[xshift=0.2cm, yshift=-0.2cm]{$v_1$}; 
                  \fill[black] (3*cos{210}, 3*sin{210}) circle (0.05 cm) node[xshift=-0.2cm, yshift=-0.2cm]{$\overline{w}_1$};  
            \fill[black] (3*cos{30}, 3*sin{30}) circle (0.05 cm) node[xshift=0.2cm,yshift=0.2cm]{$w_1$}; 
            \draw[thick, shift=(I1)] plot [domain=280:300] ({0.7*cos(\x)}, {0.7*sin(\x)});
            \node[xshift=-0.2cm,yshift=-0.5cm] at (I1) {$\alpha$};
            
            \draw[thick, shift=(I3)]  plot [domain=45:68] ({0.5*cos(\x)}, {0.5*sin(\x)});
            \node[xshift=-0.15cm,yshift=0.4cm] at (I3) {$\beta$};
            \draw[thick,white] plot [domain=50:65] ({1*cos(\x)+3*cos(320)}, {1*sin(\x)+3*sin(320)});
            \node[xshift=0.8cm,yshift=0.8cm] at (3*cos{320}, 3*sin{320}) {$\gamma$};
            \draw[thick,white]  plot [domain=285:360] ({1.3*cos(\x)+3*cos(30)}, {1.3*sin(\x)+3*sin(30)});
            \node[xshift=0.6cm,yshift=-0.9cm] at (3*cos{30}, 3*sin{30}) {$\delta$};         
\end{tikzpicture}
        \caption{If $w_2,v_2$ approach $w_1,v_1$, respectively, then all angles $\alpha,\beta,\gamma,\delta$ converge to zero.}
        \label{fig:sphere2}
    \end{figure}

The claimed value in the lower bound in Theorem \ref{thm:K3 blowup} is attained for $\alpha=\beta=\gamma=\delta=\frac{\pi}{2}$. To construct an example where all values are close to $\frac{\pi}{2}$, we exchange $v_1$ and $\overline{v}_1$ with $w_1$ and $\overline{w}_1$ in  Figure \ref{fig:sphere2}, respectively. To show that we can not do better let $\alpha=\frac{\pi}{2}+a$, $\beta=\frac{\pi}{2}+b$, $\gamma=\frac{\pi}{2}+c$ and $\delta=\frac{\pi}{2}+d$. Omitting O(r) terms, the associated triangle density is
\begin{align*}
\frac{3}{2^{12}\pi^2} &\left((\pi-a-d)(\pi-c-b) +2 a^2+2b^2+2c^2+2d^2-2\pi^2 \right)+
\frac{23}{3\cdot 2^{10}}\\
= &~ \frac{83}{3\cdot 2^{12}}+ \pi (-a-b-c-d) \\
&~ + \frac{1}{2}(a+b+c+d)^2+\frac{1}{2}(a-d)^2+\frac{1}{2}(b-c)^2+a^2+b^2+c^2+d^2
\end{align*}
and it attains its global minimum at $a=b=c=d=0$ as $-a-b-c-d \geq 0$ by Lemma $\ref{lem:anglesum}$.

As we can come from any arrangement of four antipodal pairs of points to any other arrangement by continuously changing the points, and since $r$ can be arbitrary small, any triangular density in the interval $\bigl(\frac{83}{12288}, \frac{128}{12288}\bigr)$
can be attained with one of these graphons. \qed
\end{proof}

\section{Sketch of the proof of Theorem \ref{thm:Mo}}

As ref.~\cite{Mo20} is not available in its final form at this time, we add a sketch of the proof of Theorem \ref{thm:Mo}. It is based on the following result, see \cite{Mo19}.

\begin{theorem}
\label{thm:Z}
Suppose that $n>0$ is an even integer and that $P,Q$ are disjoint sets, each containing $n/2$ points, on the unit sphere in general position. Let $D$ be the geodesic drawing of $K_{n,n}$, where points in $P\cup \overline P$ and $Q\cup \overline Q$ are the vertices of the bipartition of $K_{n,n}$. Then $cr(D) = Z(n,n)$. 
\end{theorem}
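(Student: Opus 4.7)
My plan is to exploit the antipodal symmetry of the configuration to reduce $cr(D)$ to a configuration-independent quantity, and then compute its value on one explicit configuration. First I would establish an \emph{invariance principle}: as we continuously deform the points of $P$ (or $Q$) while preserving antipodal symmetry and general position, the integer-valued count $cr(D)$ is locally constant and hence constant on each connected component of the configuration space. The key geometric fact is that the codimension-one ``bad'' events where general position fails occur in antipodally paired twins, because the antipodal map is an isometry of $\SS^2$; any change in crossings produced at one such event is cancelled by the complementary change at its antipodal twin, so passing through a generic singular configuration leaves $cr(D)$ unchanged.

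Second, assuming invariance, I would argue that the space of valid antipodally symmetric configurations is path-connected: the ``bad'' subset has codimension at least one in the product $(\SS^2)^{n/2} \times (\SS^2)^{n/2}$, so its complement is connected for $n \ge 2$. Combined with invariance, this reduces the problem to computing $cr(D)$ for a single explicit configuration. I would then exhibit one antipodally symmetric configuration in which the crossings can be enumerated directly --- for instance, by placing the points of $P$ and $Q$ near two orthogonal great-circle arcs so that the resulting spherical drawing mimics the classical Zarankiewicz drawing --- and verify that $cr(D) = Z(n,n)$ by direct combinatorial count on this example.

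The main obstacle is the careful ledger-keeping for the invariance step: one must verify rigorously that the local change in crossings produced by each codimension-one degeneracy cancels exactly against the change produced by its antipodal twin. If this cancellation proves delicate, a fallback strategy is a direct global count that stratifies the $4$-tuples of vertices (two on each side of the bipartition) by the antipodal type of the two chosen pairs. Antipodal--antipodal selections $(\{p, \overline p\}, \{q, \overline q\})$ contribute zero crossings, because all four associated edges lie on just two great circles and decompose into disjoint antipodal arcs that cannot cross; antipodal--non-antipodal selections similarly contribute zero by a single-great-circle argument. The remaining non-antipodal--non-antipodal selections form orbits of size $4$ under the $\mathbb{Z}/2 \times \mathbb{Z}/2$-action that independently antipodally flips each pair, and the crucial claim is that exactly one representative per orbit realizes its crossing matching. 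Since the number of such orbits is $\bigl(\tfrac{n(n-2)}{2}\bigr)^2 / 4 = \tfrac{n^2(n-2)^2}{16}$, this count matches $Z(n,n)$ precisely. Either route isolates the same conceptual point --- that antipodal symmetry rigidifies the crossing pattern --- but the explicit orbit count has the advantage of producing the closed-form value $Z(n,n)$ by a transparent arithmetic identity.
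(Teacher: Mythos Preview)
The paper does not prove this theorem; it is quoted from \cite{Mo19} and used as a black box in the sketch of Theorem~\ref{thm:Mo}. So there is no in-paper argument to compare against, and I assess your proposal on its own merits.

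Both of your approaches have a genuine gap at the key step. For the invariance route: the antipodal map is an isometry that carries $D$ to \emph{itself} (both vertex classes $P\cup\overline P$ and $Q\cup\overline Q$ are antipodally closed). Hence when a crossing is gained at a codimension-one degeneracy, a crossing is simultaneously \emph{gained} at the antipodal twin event --- the two changes have the same sign, not opposite, and no cancellation occurs. Invariance is true, but not for this reason. For the fallback: under your $\mathbb Z/2\times\mathbb Z/2$-action the diagonal element acts as the global antipodal map on the chosen $K_{2,2}$ and therefore preserves its crossing count. Consequently every orbit contributes an \emph{even} number of crossings, and ``exactly one representative per orbit realizes its crossing'' is impossible on parity grounds.

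The repair is to group one level coarser. For each unordered pair $\{a,b\}\subseteq P$ and $\{c,d\}\subseteq Q$, look at the full $K_{4,4}$ on $\{a,\overline a,b,\overline b\}$ versus $\{c,\overline c,d,\overline d\}$. Its sixteen edges lie on four great circles $C_{ac},C_{ad},C_{bc},C_{bd}$, since the four edges joining a fixed antipodal pair to another fixed antipodal pair partition an entire great circle. Two of these circles sharing an index, say $C_{ac}$ and $C_{ad}$, meet only at the vertices $a,\overline a$ and so contribute no crossings; the two ``opposite'' pairs $(C_{ac},C_{bd})$ and $(C_{ad},C_{bc})$ each meet in exactly two interior points. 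Thus every such $K_{4,4}$ has exactly $4$ crossings, \emph{independently of the configuration} --- this delivers the invariance you were after, for free --- and summing over the $\binom{n/2}{2}^{2}$ choices of $\{a,b\}$ and $\{c,d\}$ gives
\[
4\binom{n/2}{2}^{2}=\tfrac{1}{16}\,n^{2}(n-2)^{2}=Z(n,n).
\]
Your correct treatment of the antipodal--antipodal and antipodal--non-antipodal selections shows that no crossings arise outside these $K_{4,4}$ blocks, which completes the count.
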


\begin{proof}[of Theorem \ref{thm:Mo}, sketch]

By the nondegeneracy we can assume that $D_n$ has no antipodal vertices and let $P,Q$ be the vertices of the bipartition of the $K_{n,n}$. Let $D$ be the corresponding drawing of $K_{2n,2n}$ with parts $P \cup \overline{P}$ and $Q \cup \overline{Q}$. We can partition the set of antipodal geodesic drawings of $K_{2n,2n}$ on the sphere into classes of equivalent drawings, where two drawings are isomorphic if there exists a homeomorphism of the sphere which transforms one into the other. There are only finitely many equivalence classes of geodesic drawings of the complete bipartite graph with $2n$ vertices in each part. Let $C_1,\dots, C_m$ be those equivalence classes. Considering drawings $D$ in $C_i$, if we delete one vertex from each antipodal pair uniformly at random, we get a drawing $D_n$. The drawing $D$ has $Z(n,n)$ crossings by Theorem \ref{thm:Z} and one of those crossings appears in $D_n$ if and only if all the involved vertices are in $D_n$, which is with probability $\frac{1}{16}$. By linearity of expectation
\begin{align*}
 \EE(cr(D_n) \, \vert \, D \in C_i) = \frac{1}{16} Z(2n,2n)= \frac{1}{16}n^2(n-1)^2.
\end{align*}
Using the law of total expectation we get
\begin{align*}
    \EE(cr(D_n)) = \sum_{i=1}^m P(D \in C_i) \cdot \EE(cr(D_n) \, \vert \, D \in C_i) = \frac{1}{16} n^2(n-1)^2.
\end{align*}
Since $Z(n,n) \sim \frac{1}{16}n^4$ the theorem follows. 
\qed
\end{proof}

\end{document}